\newtheorem{theorem}{Theorem}[section]
\newtheorem{corollary}[theorem]{Corollary}
\newtheorem{lemma}[theorem]{Lemma}
\newtheorem{proposition}[theorem]{Proposition}
\theoremstyle{definition}
\theoremstyle{remark}
\g@addto@macro\bfseries{\boldmath}\makeatother
\newcommand{\appendixref}[1]{\hyperref[#1]{appendix~\ref{#1}}}
\def\equationautorefname~#1\null{eq.\,(#1)\null}
\apptocmd{\thebibliography}{\justifying\setlength{\leftskip}{7.4mm}}{}{} 
\def\simgt{\mathrel{\lower2.5pt\vbox{\lineskip=0pt\baselineskip=0pt
           \hbox{$>$}\hbox{$\sim$}}}}
\def\simlt{\mathrel{\lower2.5pt\vbox{\lineskip=0pt\baselineskip=0pt
           \hbox{$<$}\hbox{$\sim$}}}}
\def\l@subsubsection#1#2{}
\newcommand{\be}{\begin{equation}}
\newcommand{\ee}{\end{equation}}
\newcommand{\bea}{\begin{eqnarray}}
\newcommand{\eea}{\end{eqnarray}}
\newcommand{\Fig}[1]{Fig.~\ref{#1}}
\newcommand{\Eq}[1]{Eq.~(\ref{#1})}
\newcommand{\Sec}[1]{Sec.~\ref{#1}}
\newcommand{\mrm}[1]{\mathrm{#1}}
\DeclareMathOperator{\tr}{tr}
\DeclareMathOperator*{\ext}{ext}
\newcolumntype{P}[1]{>{\centering\arraybackslash}p{#1}}
\begin{document}
\title{Holographic screen sequestration}

\author{Aidan Chatwin-Davies}
\affiliation{Okinawa Institute of Science and Technology\\ 1919-1 Tancha, Onna-son, Kunigami-gun, Okinawa, Japan 904-0495\\[1mm]}
\author{Pompey Leung}
\affiliation{Department of Physics and Astronomy, University of British Columbia, 6224 Agricultural Road, Vancouver, British Columbia, V6T 1Z1, Canada\\[1mm]}
\author{Grant N. Remmen}
\affiliation{Center for Cosmology and Particle Physics, Department of Physics, New York University, New York, New York 10003, United States}
    
\begin{abstract}
\noindent 
Holographic screens are codimension-one hypersurfaces that extend the notion of apparent horizons to general (non-black hole) spacetimes and that display interesting thermodynamic properties. 
We show that if a spacetime contains a codimension-two, boundary-homologous, minimal extremal spacelike surface $X$ (known as an HRT surface in AdS/CFT), then any holographic screens are sequestered to the causal wedges of $X$.
That is, any single connected component of a holographic screen can be located in at most one of the causal future, causal past, inner wedge, or outer wedge of $X$.
We comment on how this result informs possible coarse grained entropic interpretations of generic holographic screens, as well as on connections to semiclassical objects such as quantum extremal surfaces.

\end{abstract}

\maketitle

\section{Introduction}

The past quarter-century of the anti--de Sitter/conformal field theory (AdS/CFT) correspondence~\cite{Maldacena:1997re,Gubser:1998bc,Witten:1998qj,Aharony:1999ti} has ushered in a revolution of our understanding of gravitation and spacetime.
Rather than being fundamental concepts, in AdS/CFT both quantum gravity and the bulk dimension of spacetime itself can be seen as emergent phenomena~\cite{Susskind:1994vu,Bousso:2002ju,Faulkner:2013ica}, arising as effective descriptions encoded via complicated and subtle details of the dynamics in the nongravitational world of the boundary.
This encoding has been understood as a form of quantum error correction~\cite{Almheiri:2014lwa,Pastawski:2015qua}, with the geometry encoding entanglement---specifically, the fine grained (von~Neumann) entropy---through areas of surfaces, or more general quantities in a gravitational effective field theory~\cite{Dong:2013qoa}.\footnote{See Refs.~\cite{Dong:2023bax,Gesteau:2023hbq} for recent analyses of the relationship between renormalization and holographic entropy in effective theories.}
The Ryu-Takayanagi (RT) formula~\cite{Ryu:2006bv}, and more generally the Hubeny-Rangamani-Takayanagi (HRT) formalism~\cite{Hubeny:2007xt} for dynamical spacetimes, give the prescriptions for finding the appropriate bulk surfaces whose areas compute the entropy corresponding to a given boundary region, generalizing the celebrated Bekenstein-Hawking entropy of black holes~\cite{Bekenstein:1972tm,Bekenstein:1973ur,Hawking:1976de}.

Despite the success of these famous results and many others, significant puzzles remain. How does spacetime emerge when it is not anti--de Sitter? That is, what is the analogue of the boundary theory for general spacetimes, and where is it 
located?
As a starting point, one can look to history, where Hawking's area theorem for black holes arguably was a critical seed leading eventually to our current holographic understanding of AdS/CFT.
Happily, new area theorems for general spacetimes have been found by Bousso and Engelhardt~\cite{Bousso:2015mqa,Bousso:2015qqa}, in the form of  monotonic area increase along a special hypersurface called a {\it holographic screen}.
The holographic screen has been conjectured to be the appropriate analogue of the AdS boundary for a holographic description of general spacetimes~\cite{Bousso:1999cb,Bousso:2015mqa,Bousso:2015qqa,Nomura:2017npr}.
Within AdS/CFT, an entropic interpretation of the outermost spacelike portion of a holographic screen, that is, an apparent horizon, was provided by Refs.~\cite{Engelhardt:2017aux,Engelhardt:2018kcs}. 
There it was shown that the area of a slice of the screen is equal to the {\it outer entropy}: the area of the maximal HRT surface, in units of $4G\hbar$, subject to holding the geometry in the outer wedge---the exterior of the slice of the screen---fixed and marginalizing over all possible completions of the spacetime.
In this sense, at least the apparent horizon can be viewed as a coarse grained holographic measure of entanglement entropy.
Moreover, the outer entropy can be viewed as computing a quasilocal energy for general (i.e., nonmarginal) surfaces~\cite{Bousso:2018fou,Wang:2020vxc}.

To truly have a general formulation of holography for arbitrary spacetimes, however, it is likely that a necessary condition is an interpretation of the {\it entire} holographic screen in information-theoretic terms.
A construction for timelike holographic screens analogous to Refs.~\cite{Engelhardt:2017aux,Engelhardt:2018kcs} is as yet unknown.
In order to find such a construction, it would be useful to have a complete characterization of where the holographic screen and HRT surface can be located relative to each other.

It is this latter question that we answer in this paper. 
Specifically, we show that holographic screens are {\it sequestered} relative to the HRT surface: given an HRT surface, a holographic screen is contained entirely within either its interior, exterior, past, or future, being forever forbidden from crossing the null congruences launched from the HRT surface.
There are special cases in which, under certain conditions, the screen can touch---but not pass through---a null congruence from the HRT surface.
Specifically, while such instantaneous intersections are strictly forbidden for spherically symmetric geometries, in nonspherical cases they cannot be ruled out if either (i) some generators of the congruence do not intersect the screen (i.e., the intersection is incomplete), or (ii) the intersection is a surface of zero Euler characteristic. Possibility (i) is fairly generic, and indeed a complete intersection can always be infinitesimally deformed to an incomplete one. However, possibility (ii) is fairly exotic, as it requires that the HRT surface be, e.g., a topological torus, Klein bottle, or odd-dimensional sphere.

This paper is organized as follows.
After providing a self-contained primer of the geometric background and definitions in Sec.~\ref{sec:BackgroundDefs}, we prove our main sequestration result, Theorem~\ref{thm:Sequestration}, in Sec.~\ref{sec:Sequestering}.
We enumerate the possibilities for exceptional cases where the screen and null congruences can touch in detail in Sec.~\ref{ssec:EdgeCases}.
We comment on obstructions to straightforward application of a coarse grained interpretation of timelike holographic screens in Sec.~\ref{sec:Application}.
We conclude in Sec.~\ref{sec:Discussion}, placing our result in context and commenting on connections to non-HRT extremal surfaces (e.g., ``minimax'' surfaces) and semiclassical quantum corrections.

\section{Background and definitions}
\label{sec:BackgroundDefs}

Let us begin by reviewing relevant definitions and background material.
We will largely follow the conventions and nomenclature from the literature on holographic screens, for example, as rendered in Ref.~\cite{Engelhardt:2020mme}.

\subsection{Topological definitions}
\label{ssec:TopologicalDefs}

We will mainly consider connected spacetimes $\mathcal{M}$ that possess a causal boundary $\mathcal{B}$ consisting of one or more timelike or null connected components.

Throughout this paper, we will be concerned with geometric objects constructed from light rays: causally defined regions, expansion of bundles of light rays, etc.
We begin, for completeness, by reviewing some terminology, following Refs.~\cite{Engelhardt:2018kcs,Nomura:2018aus,Bousso:2018fou}.
For a set of points $S\subset {\cal M}$, we define the chronological and causal future of $S$ as $I^+[S]$ and $J^+[S]$, denoting the sets of all points in ${\cal M}$ connected with any point in $S$ by timelike or null paths, respectively.
The chronological and causal pasts $I^-[S]$ and $J^-[S]$ are defined analogously.
We will assume that ${\cal M}$ is globally hyperbolic: it is free of closed causal curves and, for all points $p,q\in {\cal M}$, $J^+(p)\cap J^-(q)$ is compact (or, in the asymptotically-AdS case, when supplemented with  boundary conditions as discussed in Ref.~\cite{Avis:1977yn}).
We define the future (past) domain of dependence of $S$, $D^\pm[S]$, as the set of all $p\in {\cal M}$ for which all past (future) inextendible  causal curves through $p$ pass through $S$, and define $D[S]=D^+[S]\cup D^-[S]$.
We will refer to a surface as acausal if every pair of points on the surface is spacelike-separated; the weaker condition achronal is defined analogously, but allowing null-separated points.
A Cauchy slice $\Sigma$ is defined as an achronal codimension-one hypersurface for which $D[\Sigma]={\cal M}$.
We will at times refer to an achronal slice of the boundary ${\cal B} = \partial{\cal M}$ (i.e., ${\cal B}\cap\Sigma = \partial\Sigma$ for a Cauchy slice $\Sigma$) as itself a boundary $B$, trusting that the difference between $B$ and ${\cal B}$ will be clear in context.
From a closed, codimension-two, achronal surface $\nu \subset {\cal M}$, we can define four orthogonal null congruences---families of null geodesics hereafter referred to as light sheets---associated with the null rays launched either outward or inward, past- or future-directed, from $\nu$, spatially orthogonal to $\nu$ at the launch point.
Let us define the null vector on one side (e.g., ``outward'') as $k$ and on the other side (``inward'') as $\ell$, with $+k$ and $+\ell$ being future-directed, and write the four congruences as $N_{\pm k}[\nu]$ and $N_{\pm \ell}[\nu]$.
When convenient, we will also denote the union of future- and past-directed null congruences by $N_k[\nu] = N_{+k}[\nu] \cup N_{-k}[\nu]$ and $N_{\ell}[\nu] = N_{+\ell}[\nu] \cup N_{-\ell}[\nu]$.  
Geodesics can exit the light sheets if~\cite{Wald,Penrose} and only if~\cite{Akers:2017nrr} they encounter either a caustic or an intersection with a distinct null geodesic.
For $\nu$ Cauchy-splitting---that is, dividing a Cauchy surface $\Sigma \supset \nu$ into two sides $\Sigma^\pm$---we can write
\be
\begin{aligned}
N_{\pm k} [\nu] &=\dot I^{\pm}[\Sigma^\pm] - \Sigma^\pm \\
N_{\pm \ell}[\nu] &= \dot I^\pm[\Sigma^\mp]-\Sigma^\mp. 
\end{aligned}
\ee
where $\dot{}$ denotes the boundary of a set. See Fig.~\ref{fig:Defs}.
We will choose $\Sigma^+[\nu]$ to be on the side of $\nu$ toward which $k$ points and $\Sigma^-[\nu]$ to be on the side toward which $\ell$ points.
We define the outer wedge $O_W[\nu]=D[\Sigma^+[\nu]]$ and inner wedge $I_W[\nu]=D[\Sigma^-[\nu]]$.

\begin{figure}[h]
	\centering
	\includegraphics[width=0.9\linewidth]{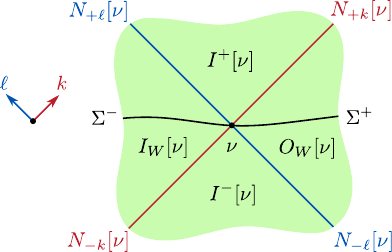}
	\caption{The causal wedges and null light sheets defined by a Cauchy hypersurface $\Sigma$, a Cauchy-splitting surface $\nu$, and the two null directions $k$ and $\ell$.}
	\label{fig:Defs}
\end{figure}

\subsection{Dynamical definitions}
\label{ssec:DynamicalDefs}

Having concluded the topological preliminaries, we now turn to dynamics.
We will define the 2+2 formalism for general relativity, in which the Einstein equations governing the evolution of spacetime can be recast in terms of these light-centric objects.
This formulation is known as the {\it characteristic initial data formalism} for general relativity~\cite{Rendall:2000,Brady:1995na,ChoquetBruhat:2010ih,Luk:2011vf,Chrusciel:2012ap,Chrusciel:2012xf,Chrusciel:2014lha}, and one can view it as akin to the more familiar 3+1 Arnowitt-Deser-Misner formalism, but with spatial Cauchy data replaced with Cauchy data on a null surface.
Given a Cauchy-splitting surface $\nu$, let us focus on the light sheet along the $k$-direction, $N_k[\nu]$.
We will define $k$ to be affinely parametrized as it is parallel transported along itself, and $\ell$ to be parallel transported along $k$ but continually rescaled so that $k\cdot \ell=-1$.
We define the induced metric as $q_{\mu\nu} = g_{\mu\nu} + k_\mu \ell_\nu +k_\nu \ell_\mu$, in terms of which we have the null extrinsic curvature,
\be 
(B_k)_{\mu\nu} = q_\mu^{\;\;\rho}q_\nu^{\;\;\sigma} \nabla_\sigma k_\rho.
\ee
This fundamental object defines the expansion,
\be 
\theta_k = q^{\mu\nu}(B_k)_{\mu\nu},
\ee
and shear,
\be
(\varsigma_k)_{\mu\nu} = (B_k)_{(\mu\nu)}  - \tfrac{1}{D-2}\theta_k q_{\mu\nu}.
\ee
We can also define the twist one-form, describing frame-dragging,
\be
\omega_\mu = -\ell_\nu q_\mu^{\;\;\rho} \nabla_\rho k^\nu.
\ee
Finally, we will write null-contracted indices as $(\cdot)_k = (\cdot)_\mu k^\mu$, and we define a Lie derivative ${\cal L}_k$ along $k$, a covariant derivative along a constant-affine slice ${\cal D}_\mu = q_\mu^{\;\;\nu} \nabla_\nu$, and the intrinsic Ricci curvature ${\cal R}$ of a slice.
In terms of these fundamental geometric data,  the dynamics of the light sheet dictated by general relativity become a set of three evolution equations, known collectively as the constraint equations,
\be
\begin{aligned}
\nabla_k \theta_k &= -\frac{1}{D-2}\theta_k^2 - \varsigma_k^2 - 8\pi GT_{kk} \\
q_\mu^{\;\;\nu} {\cal L}_k \omega_\nu &= -\theta_k \omega_\mu {+}\tfrac{D-3}{D-2} {\cal D}_\mu \theta_k  {-} ({\cal D}\cdot \varsigma_k)_\mu \,{+} \,8\pi G T_{\mu k}\\
\nabla_k \theta_\ell &= -\frac{1}{2}{\cal R} -\theta_k \theta_\ell +\omega^2  + {\cal D}\cdot \omega + 8\pi G T_{k\ell},
\end{aligned} \label{eq:CIDF}
\ee
where the energy-momentum tensor can include a cosmological constant.
The first equation in Eq.~\eqref{eq:CIDF} is the Raychaudhuri equation, the second is the Damour-Navier-Stokes equation, and the third is the cross-focusing equation.
Along $N_\ell[\nu]$, where $\ell$ is affinely parameterized and $k$ is parallel transported and rescaled such that $k\cdot \ell=-1$, the same equations apply, with $k\leftrightarrow \ell$ and $\omega\rightarrow -\omega$.
The principal result of the characteristic initial data formalism is that, given a set of null congruences $\Sigma$ on which data has been specified satisfying the constraint equations~\eqref{eq:CIDF}, there exists a unique spacetime satisfying the Einstein equations in $D[\Sigma]$.

\subsection{Screen-related definitions}
\label{ssec:ScreenDefs}

We now turn to a few other definitions and results that are more closely related to horizons and holography.
First, concentrating on the $k$-directed light sheet, we describe a surface as normal if $\theta_k >0$ and $\theta_\ell<0$, \mbox{(anti-)}trapped if both $\theta_k$ and $\theta_\ell<0$ ($>0$), and marginally (anti-)trapped if $\theta_k = 0$ and $\theta_\ell \leq 0$ ($\geq 0$).
An extremal surface satisfies $\theta_k = \theta_\ell =0$.
For a marginal surface $\sigma$, we define strict spacetime stability as the requirement that $\nabla_k \theta_\ell \leq 0$ everywhere on $\sigma$, with equality only if $\sigma$ is extremal; as shorthand, we refer to a marginal surface obeying strict spacetime stability simply as {\it stable}.

It will also be useful to specify a refinement of the notion of a marginal surface: a minimal marginal surface $\mu$ (that is, a {\it minimar} surface~\cite{Engelhardt:2018kcs}) is a stable marginal surface homologous to the boundary $B$ and for which there exists a Cauchy slice $\Sigma$ of $O_W[\mu]$ on which, among all $\nu \subset \Sigma$ homologous to $B$, $\mu$ has minimal area, $A[\mu]\leq A[\nu]$. 

We define a {\it Hubeny-Rangamani-Takayanagi (HRT) surface} as an extremal surface $X$ homologous to the boundary $B$, such that there exists a Cauchy slice $\Sigma$ of $D[O_W[X]]$ on which $X$ is minimal.
As such surfaces can be identified via a maximin optimization~\cite{Wall:2012uf}, we will sometimes refer to HRT surfaces as ``being maximin.''
Though HRT surfaces can be defined for subregions of the boundary, throughout, we will be interested in taking $B$ to be a complete connected component.
One can show that HRT surfaces are minimar~\cite{Engelhardt:2018kcs}.
HRT surfaces are of particular interest to holography, since the area of the HRT surface, divided by $4G\hbar$, corresponds to the fine grained (von~Neumann) entropy associated with the entanglement of the state on $B$ with its complement, 
\be
S(\rho_B) = -{\rm tr}(\rho_B \log \rho_B ) = \frac{A[X]}{4G\hbar}.
\ee 
This equality has been proven in the context of AdS/CFT, and it provides the appropriate generalization of the RT prescription for computing the entropy in terms of minimal area surfaces, which holds for static spacetimes.
We remark that, while we certainly have holographic applications in mind, we do not specialize to asymptotically AdS spacetimes when defining an HRT surface.
 
Finally, we define a {\it holographic screen}, a special surface with a remarkable area theorem~\cite{Bousso:2015mqa,Bousso:2015qqa} and compelling links to the coarse graining of holographic information~\cite{Engelhardt:2017aux,Engelhardt:2018kcs}.
A {\it future holographic screen} is a codimension-one hypersurface for which there exists a slicing into compact, acausal codimension-two surfaces, each of which is strictly marginally trapped, by which we mean $\theta_k=0$ and $\theta_\ell <0$.
Such a slicing is called a foliation, and the marginally trapped slices are termed leaves.
A past holographic screen is defined analogously, for marginally antitrapped surfaces.
Throughout, we take our holographic screen to have $C^2$ smoothness.
A holographic screen generalizes the more familiar notion of an apparent horizon (e.g., for a black hole), which is a hypersurface foliated by the outermost marginally trapped leaves on some Cauchy slice.
Like an apparent horizon, a holographic screen depends on the slicing (i.e., on the coordinate choice or observer).
Nonetheless, the holographic screen carries physical significance, in the form of an area theorem.
Defining a scalar field $\tau$ on the future holographic screen $H$, so that each leaf $\sigma$ is a surface of constant $\tau$, one can define the tangent vector $h^\mu = (\partial_\tau)^\mu = \alpha \ell^\mu + \beta k^\mu$.
Throughout, we assume Einstein's equations plus the null energy condition (NEC), $T_{uu} \geq 0$ for any null vector $u$. As shown in Ref.~\cite{Bousso:2015qqa}, it is then useful to invoke a handful of assumptions defining a {\it regular} holographic screen, specifically, the genericity conditions that
\begin{enumerate}
\item $R_{kk} + \varsigma_k^2 > 0$ everywhere on $H$, and
\item $\alpha = 0$ only in a measure-zero region of $H$ forming the boundary between regions where $\alpha \gtrless 0$,
\end{enumerate}
as well as the technical requirements that 
\begin{enumerate}
\setcounter{enumi}{2}
\item dividing $H$ up into its regions of $\alpha$ with definite sign, each region contains either a complete leaf or is timelike; and
\item for each leaf $\sigma$, there exists a Cauchy slice for which $\sigma$ is Cauchy-splitting.
\end{enumerate}
From these definitions, it follows~\cite{Bousso:2015qqa} that $\alpha<0$ throughout $H$.

As a consequence, only certain transitions in the signature of $h$ are permitted, precisely such that the flow along the screen is either past- or outward-directed.
The future holographic screen can then be shown to satisfy an area law: $A[\sigma(\tau_2)] > A[\sigma(\tau_1)]$ for all $\tau_2>\tau_1$.
This area law is distinct from Hawking's area theorem, which pertains to event horizons rather than apparent horizons, though the two have been unified into a larger class of objects~\cite{Nomura:2018aus}.

Hawking's area theorem and its recasting as the second law of thermodynamics applied to the Bekenstein-Hawking entropy of black holes suggests the existence of an interpretation of holographic screens in terms of a coarse graining of some holographic entropy.
Concretely, an information-theoretic interpretation for the outermost spacelike portion of the holographic screen---that is, the apparent horizon---was found in Ref.~\cite{Engelhardt:2017aux}.
Given a holographic asymptotically-AdS spacetime, one defines the {\it outer entropy} $S^{\mrm{(outer)}}[\sigma]$ associated with a marginally trapped surface $\sigma$ as the largest von Neumann entropy among those of boundary states $\rho_B$ on $B$ such that $\rho_B = \tr_{\bar B}\rho$ and the geometry $\mathcal{M}_\rho$ dual to $\rho$ contains $O_W[\sigma]$:
\begin{equation} \label{eq:Sout}
S^{\text{(outer)}}[\sigma] = \max_{\rho \, : \, O_W[\sigma] \subset \mathcal{M}_\rho} S(\rho_B).
\end{equation}
Equivalently, via the HRT formula, $S^\mrm{(outer)}[\sigma]$ is equal to the maximal area, in units of $4G\hbar$, of any HRT surface constructible in the spacetime subject only to the NEC and to holding the outer wedge $O_W[\sigma]$ fixed.\footnote{Outer entropy is a quantity that is defined intrinsically in terms of the bulk spacetime. The corresponding boundary-centric quantity is simple entropy. We will not make use of it here, but details are explained in Refs.~\cite{Engelhardt:2017aux,Engelhardt:2018kcs}.}
The outer entropy is a coarse grained holographic entropy, as it describes a von~Neumann entropy maximized subject to only partial information about the spacetime, namely $O_W[\sigma]$.

For $\mu$ minimar, Refs.~\cite{Engelhardt:2017aux,Engelhardt:2018kcs} showed that
\begin{equation} \label{eq:Sout_area}
S^{\mrm{(outer)}}[\mu]= \frac{A[\mu]}{4G\hbar}.
\end{equation}
The fact that outer wedges nest along the apparent horizon, i.e., $O_W[\mu(\tau_2)]\subset O_W[\mu(\tau_1)]$ for $\tau_2 > \tau_1$, then explains the growth of the area in entropic terms, as the outer entropy is computed with more and more data about the spacetime integrated out (see, e.g., Fig.~2(b) of Ref.~\cite{Engelhardt:2017aux}).
A pressing open question, of particular relevance to the case of collapsing black holes and cosmological applications of holographic screens, is to provide an analogous, explicit holographic entropy construction for timelike portions of the screen.
In Sec.~\ref{sec:Application}, we comment on challenges inherent to doing so.
However, toward that end, understanding and characterizing where HRT surfaces can be located, in generality, relative to a holographic screen is a question of critical importance to any future attempt at providing such an entropic formulation.

\section{Sequestering holographic screens with extremal surfaces}
\label{sec:Sequestering}

Suppose we have a spacetime $\mathcal{M}$ with boundary $\mathcal{B}$ and that the NEC is satisfied.
Given an HRT surface $X$ in $\mathcal{M}$, we want to know where holographic screens $H$ can be located relative to $X$.
Without loss of generality, let us consider the case of future holographic screens, which are foliated by marginally trapped surfaces $\sigma$ as defined in Sec.~\ref{sec:BackgroundDefs}.
To adapt our arguments to past holographic screens, simply take the time-reverse and replace marginally trapped surfaces with marginally antitrapped ones. 
Then, three properties of holographic screens and HRT surfaces provide rules that we can use to constrain where $\sigma$ can be situated with respect to $X$, and thus determine where $H$ can be located relative to $X$. 
These rules can be summarized as follows:
\begin{enumerate}[label={(\roman*)}]
	\item Cross-sectional areas of the null congruence $N_{k}[\sigma]$ are nondecreasing toward the marginal surface $\sigma$; similarly, cross-sectional areas of $N_{k}[X]$ and $N_{\ell}[X]$ are nondecreasing toward the HRT surface $X$.\label{rule1}
	\item Areas of leaves foliating $H$ strictly increase toward the past and exterior along $H$.\label{rule2}
	\item There exists a Cauchy slice $\Sigma$ containing the HRT surface $X$, for which any Cauchy-splitting surface in $\Sigma$ has area at least as large as $X$. \label{rule3}
\end{enumerate}

Let us take a closer look at where each of these rules come from. 
Rule~\ref{rule1} stems directly from the definition of marginality, the Raychaudhuri equation~\eqref{eq:CIDF}, and the NEC.
Marginally trapped surfaces $\sigma$ that foliate the future holographic screen have vanishing null expansion in the marginal direction $k$, i.e., $\theta_k = 0$ for any $\sigma \in H$. 
Likewise, an HRT surface $X$ is extremal, and therefore marginal in both the $k$ and $\ell$ directions, so $\theta_k = 0$ and $\theta_\ell = 0$ at every point on $X$ as well. 
The Raychaudhuri equation~\eqref{eq:CIDF} then implies that $\theta_{\pm k} \leq 0$ along the null congruences $N_{\pm k}[\sigma]$ emanating from $\sigma$, and similarly $\theta_{\pm k} \leq 0$ and $\theta_{\pm \ell} \leq 0$ along $N_{\pm k}[X]$ and $N_{\pm \ell}[X]$. 
In other words, if $\nu$ is a section of the null congruence $N_{k}[\sigma]$, then we have that $A[\nu] \leq A[\sigma]$.
Likewise, $A[\nu] \leq A[X]$ holds for $\nu \in N_{k}[X], N_{\ell}[X]$. 
Rule~\ref{rule2} is the area law for holographic screens~\cite{Bousso:2015mqa, Bousso:2015qqa} reviewed in Sec.~\ref{ssec:ScreenDefs}. 
Finally, rule~\ref{rule3} comes from the definition of an HRT surface as introduced in Sec.~\ref{ssec:DynamicalDefs}. 

\begin{figure}[h]
	\centering
	\includegraphics[width=0.85\linewidth]{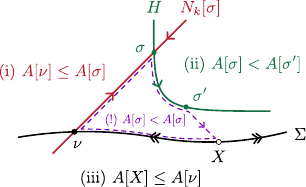}
	\caption{A configuration of $H$ and $X$ showing how rules~\ref{rule1}-\ref{rule3} apply in practice. Arrows on hypersurfaces represent the direction of increasing cross-sectional area. The double arrow on $\Sigma$ indicates that a notion of increasing area is only true when comparing a Cauchy-splitting surface $\nu \in \Sigma$ with the HRT surface $X$. The purple dashed line highlights the fact that in this configuration a closed, directed path can be formed that is inconsistent with the rules. This is an example of what we call a \emph{forbidden loop}.}
	\label{fig:ThreeRules}
\end{figure}

The illustration in Fig.~\ref{fig:ThreeRules} demonstrates all three rules in action for one configuration of a screen $H$ in the presence of an HRT surface $X$.
In this diagram, the surface $\nu$ is defined by the intersection\footnote{In the context of maximin surfaces, this is also known as the representative of a surface $\nu$ on a Cauchy slice. See Refs.~\cite{Engelhardt:2018kcs,Wall:2012uf}, for example.} of the null congruence $N_{k}[\sigma]$ and the Cauchy slice $\Sigma$, i.e., $\nu = N_{k}[\sigma] \cap \Sigma$.
Note that, in this particular case, a closed, directed path in the Penrose diagram along which the area of foliating surfaces increases can be formed from $N_{-k}[\sigma]$, $H$, $N_{+\ell}[X]$, and $\Sigma$ by invoking the above three rules. 
Chaining the inequalities along these hypersurfaces and comparing cross-sectional areas along this closed path leads to a contradiction: $A[\sigma] < A[\sigma]$.
Note that here we are implicitly assuming that $N_{+\ell}[X] \cap H = \sigma'$ is a leaf of $H$ in order to invoke rule~\ref{rule2}. This will not generically be the case in the absence of symmetry; nevertheless, the basic idea remains unchanged. In the interest of clarity, we will continue with this assumption for now and later relax it in \Sec{ssec:Nonspherical}.

We thus find that Fig.~\ref{fig:ThreeRules} in fact illustrates an example where a configuration of $H$ and $X$ is inconsistent with the above rules. 
This observation gives us the simple criterion: if such a closed, directed path (i.e., hypersurface) can be formed with $\sigma \in H$ and $X$ being any two nodes (i.e., inscribed surfaces) in the path, then such a configuration of $H$ and $X$ is forbidden as it results in a contradictory area inequality. 
We will henceforth refer to these closed paths as \emph{forbidden loops}. 
Any configuration of $H$ and $X$ that does not form a forbidden loop is then consistent with the three rules and is allowed. 
It is important to point out that the crux of this argument relies on the fact that rule~\ref{rule2}, the area law for holographic screens, is a strict inequality. 
If all inequalities in the chain were weak, then a contradiction could not be established (albeit with the loophole being a very finely tuned spacetime in which the cross-sectional area is constant along the closed path).

An HRT surface $X$ naturally partitions the spacetime into four regions through its orthogonal null congruences $N_{k}[X]$ and $N_{\ell}[X]$: the causal future $I^+[X]$, outer wedge $O_W[X]$, inner wedge $I_W[X]$, and causal past $I^-[X]$. 
By applying the ``forbidden loops'' argument of Fig.~\ref{fig:ThreeRules} to each of these four wedges, we can explicitly determine where a screen leaf $\sigma$ can be located relative to $X$, and thus what types of structures of $H$ are permitted relative to this partition. 
A crucial observation can immediately be made: like the scenario in Fig.~\ref{fig:ThreeRules}, a forbidden loop can be found whenever a segment of a holographic screen passes through any of the null congruences $N_{k}[X]$ and $N_{\ell}[X]$ fired from $X$.
These forbidden loops are explicitly illustrated in Fig.~\ref{fig:ForbiddenLoops}.
Interestingly, note in particular that the forbidden loop for a spacelike segment crossing $N_{-\ell}[X]$ does not rely on the Cauchy slice $\Sigma$, and is thus agnostic of the maximin nature of $X$. 
We will return to this curiosity when examining edge cases in \Sec{ssec:EdgeCases} and find that we do not need the forbidden loop argument to prove that screens cannot cross $N_{-\ell}[X]$.
By virtue of the fact that forbidden loops appear whenever holographic screens cross the null congruences of $X$, we arrive at the conclusion that holographic screens must remain \emph{sequestered} within a given wedge (at least for the symmetric case as noted above). We will prove the  same statement, without assuming symmetry of the spacetime, in \Sec{ssec:Nonspherical}.

	\begin{figure*}[t]	
		\centering
		\includegraphics[width=0.78\textwidth]{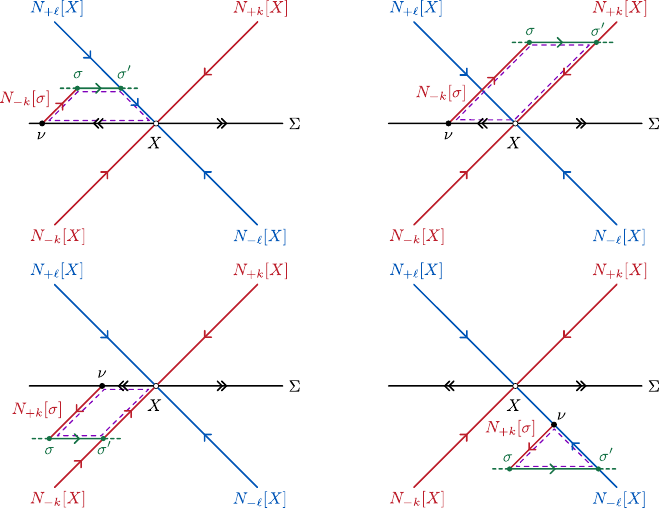}
		\caption{Forbidden loops (purple dashed lines) for spacelike segments of a holographic screen (green solid line) passing through each of the four null congruences $N_{\pm k}[X]$ and $N_{\pm \ell}[X]$ fired from $X$ (red and blue solid lines respectively). In each case, in addition to the screen segment and one of the four null congruences, the forbidden loop is closed by $N_{\pm k}[\sigma]$, a null congruence of $\sigma$ in the marginal null direction $k$, and possibly the Cauchy slice $\Sigma$. Analogous results hold for timelike or mixed signature holographic screen segments.}
		\label{fig:ForbiddenLoops}
	\end{figure*}

\subsection{Allowed screen trajectories}
\label{ssec:AllowedScreens}
\vspace{-1mm}
Having established that holographic screens must be sequestered within each of the four wedges defined by $X$, we can further determine how screens are allowed to behave within each wedge. 
To study this question, we will systematically examine the trajectory of screens that are timelike or spacelike early on in their flow. By ``early'' and ``late,'' we  mean in the sense of flow along the screen, i.e., as $\tau\rightarrow -\infty$ or $+\infty$, respectively.
Likewise, when we say that a region of a screen is ``timelike'' or ``spacelike,'' we mean that $h_\mu h^\mu < 0$ or $h_\mu h^\mu > 0$ on that region, respectively.
A screen that is timelike early on in its flow has leaves on which $h_\mu h^\mu < 0$ everywhere as $\tau \rightarrow -\infty$, with other combinations defined analogously.
(Recall, however, that in the absence of sufficient symmetry $h^\mu$ need not have uniform character on a single leaf.)
The allowed trajectories of future holographic screens are summarized in Fig.~\ref{fig:AllowedScreens}.
\vspace{-1.5mm}

\subsubsection{Causal future $I^+[X]$}
\vspace{-1.5mm}
Let us first focus our attention on screens with early timelike flows. 
In the causal future of $X$, these are screens with a timelike segment close to future infinity. 
By the area law for holographic screens~\cite{Bousso:2015mqa, Bousso:2015qqa}, timelike segments of future holographic screens may either remain timelike and past-directed, or turn outward into a spacelike segment; turning spacelike inward violates monotonicity of its cross-sectional area. 
Since we already know that holographic screens cannot cross $N_{k}[X]$ and $N_{\ell}[X]$, if a timelike segment coming in from future infinity is to remain timelike, it can only end on or go through $X$.
As shown in Fig.~\ref{fig:HRT_EndLoops}, however, it is easy to see that a forbidden loop arises in this case as well, so the only allowed screens with early timelike flows are those that have late spacelike flows that asymptote to $N_{+k}[X]$. 
On the other hand, by sequestration, screens with early spacelike flows in the future of $X$ must asymptote, backward along the flow, to the null congruence $N_{+\ell}[X]$.
In a similar fashion, spacelike segments of future holographic screens may either remain spacelike toward the exterior, or turn timelike toward the past, as turning timelike toward the future violates the area law. 
Since such screens must again not cross $N_{k}[X]$ or $N_{\ell}[X]$, or pass through $X$, they must also have late spacelike flows asymptoting to $N_{+k}[X]$ just as for the timelike case.
Note that while consistency with the rules above constrain the early and late flow behavior of $H$ to the future of $X$, the screen itself is allowed to have arbitrarily many intermediate transitions between spacelike and timelike regions of $H$, as long as monotonicity as stipulated by the area law is respected.

\begin{figure}[t]
	\centering
	\includegraphics[width=0.95\linewidth]{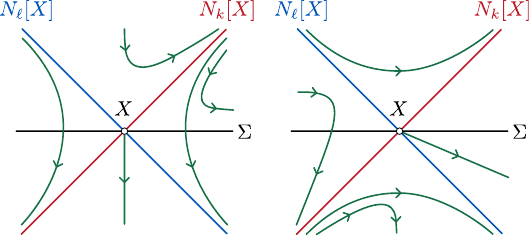}
	\caption{Representative trajectories of future holographic screens that are consistent with the rules outlined above. The left figure shows all allowed trajectories with early timelike flows while the right figure shows all allowed trajectories with early spacelike flows. In both cases, sequestration stipulates that no screens are allowed to cross the null congruences $N_{\pm k}[X]$ (red line) and $N_{\pm \ell}[X]$ (blue line) fired from the HRT surface $X$. Note that although screens may start on $X$ in some cases, they can never end on $X$. Here, we are depicting the allowed early and late asymptotic behavior along the screen; any number of intermediate transitions between spacelike and timelike regions is allowed.}
	\label{fig:AllowedScreens}
\end{figure}

\subsubsection{Outer wedge $O_W[X]$}

In the outer wedge, by sequestration, screens that have early timelike flows must asymptote to $N_{+k}[X]$ toward the future. 
As we evolve along the flow, the screen can continue in a timelike fashion and asymptote to $N_{-\ell}[X]$ toward the past so as to not cross null congruences of $X$, or turn spacelike toward the outer boundary $\mathcal{B}$ in accordance with the area law. 
There is no penalty for the screen to cross the Cauchy surface $\Sigma$ defining $X$ as this does not generate forbidden loops, and so late spacelike flows can be either to the future or past of $\Sigma$. 
On the other hand, screens with early spacelike flows must begin at the HRT surface $X$ since sequestration dictates that they cannot come from another wedge. 
These early spacelike segments can have the same late flow behavior as early timelike segments, i.e., they can have late spacelike flows to the future or past of $\Sigma$, or turn timelike to the past and asymptote to $N_{-\ell}[X]$. 
Note that screens in the outer wedge all have to end at the boundary $\mathcal{B}$.

\begin{figure}[h]
	\centering
	\includegraphics[width=\linewidth]{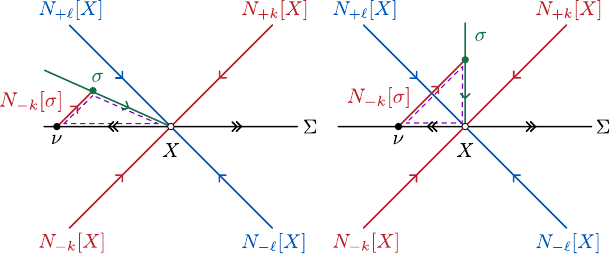}
	\caption{Forbidden loops (purple dotted lines) that appear when a holographic screen (green line) in the interior $I_W[X]$ (left panel) or future $I^+[X]$ (right panel) ends at the HRT surface $X$. Analogous loops can be found when the screen is to the past of the Cauchy slice $\Sigma$ in the interior.}
	\label{fig:HRT_EndLoops}
\end{figure}

\vspace{-3mm}

\subsubsection{Inner wedge $I_W[X]$}

In the inner wedge, screens with early timelike flows must asymptote to $N_{+\ell}[X]$ by sequestration. 
These screens can again turn spacelike outward, but like in the future wedge, will inevitably run into $X$.
Again, this case can be ruled out by the presence of forbidden loops as shown in Fig.~\ref{fig:HRT_EndLoops}, so screens with early timelike flows can only have late timelike flows that asymptote to $N_{-k}[X]$ by sequestration. 
Screens with early spacelike flows can have spacelike segments either to the future or past of the Cauchy surface $\Sigma$ like screens in the outer wedge, but must turn toward the past in a timelike fashion by virtue of the area law and again asymptote to $N_{-k}[X]$ for late flows. 

\vspace{-3mm}

\subsubsection{Causal past $I^-[X]$}

Similar to screens with early spacelike flows in the outer wedge, screens with early timelike flows in the causal past of $X$ can only start at the HRT surface $X$ due to sequestration. 
From here, the screen can have late timelike flows toward past infinity, or turn spacelike outward and asymptote to $N_{-\ell}[X]$ for late spacelike flows. 
Screens with early spacelike flows in the past of $X$ must asymptote, backward along the flow, to $N_{-k}[X]$ by sequestration, then continue to have a late timelike flow to past infinity following the area law, or asymptote to $N_{-\ell}[X]$ with a late spacelike flow by sequestration.

\subsection{Nonsymmetric spacetimes}
\label{ssec:Nonspherical}

In using forbidden loops to argue for the sequestration of holographic screens, we made use of the fact that the area law for holographic screens is strictly monotonic.
It was then straightforward to show that a contradiction arises when we compare areas along $H$ in these forbidden loops. 
This comparison of cross-sectional areas is only meaningful, however, when $H \cap N_{\pm u}[X] = \sigma'$ for $u = k,\ell$---for whichever choice the intersection is nonempty---is itself a leaf of $H$. 
While this is true in symmetric cases, e.g., spacetimes with spherical symmetry, the intersection of $H$ with a null congruence of $X$ will not generically be a leaf of the screen.
Nevertheless, we can refine our argument and show that holographic screen sequestration continues to hold for a large class of spacetimes.

As before, let $\sigma(\tau)$ denote leaves foliating a holographic screen $H$ with $\tau$ parametrizing the flow along the screen so that $A[\sigma(\tau_1)] < A[\sigma(\tau_2)]$ for $\tau_1 < \tau_2$.
We will now consider the case in which the null congruence $N_{\pm u}[X]$ intersects $H$ in an oblique manner such that the intersection $H \cap N_{\pm u}[X] = Y$ is some compact spacelike surface that is not a leaf, i.e., not a marginal surface.
More generally, since $k, \ell$ are defined everywhere by the outgoing null congruences shot out from the leaves $\sigma(\tau)$ of $H$, and parallel transporting along themselves, the null congruences fired from $X$ in nonsymmetric spacetimes will no longer be generated by $k, \ell$ as they are not necessarily orthogonal to $X$.\footnote{See the discussion in Step~2, Sec.~5.2 of Ref.~\cite{Engelhardt:2018kcs}, for example.} 
With this caveat in mind, we will call the $X$-orthogonal null congruences $N_{\pm u}[X]$ for $u = \tilde{k}, \tilde{\ell}$, which are in general nonparallel to the $\sigma(\tau)$-orthogonal $k, \ell$ unless the spacetime exhibits sufficient (e.g., spherical) symmetry.
The geometry of these surfaces is illustrated in Fig.~\ref{fig:DeformProof} for the case of $Y = H \cap N_{\tilde{k}}[X]$ without loss of generality (a similar cartoon can be drawn for $Y = H \cap N_{\tilde{\ell}}[X]$).

\begin{figure}[h]
	\centering
	\includegraphics[width=0.8\linewidth]{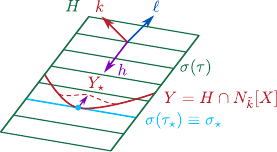}
	\caption{A holographic screen $H$ with transverse spacelike directions partially restored in the direction perpendicular to the flow generated by the tangent vector $h = \alpha \ell + \beta k$. The canonical foliation by marginal surfaces $\sigma(\tau)$ is given by constant-$\tau$ slices along $H$. $Y$ is an oblique intersection of $H$ with the null congruence $N_{\tilde{k}}[X]$. Infinitesimally deforming the intersection $Y$ in the $-h$ direction locally at a point (light blue dot) that shares a normal vector to a complete leaf $\sigma_{\star}$ gives another surface $Y_{\star} \in H$ (dashed red line) with strictly smaller area than $Y$.}
	\label{fig:DeformProof}
\end{figure}

Since $Y$ is not a leaf of $H$, we can no longer use the monotonic area law and conclude that $A[\sigma(\tau)] < A[Y]$ for some leaf along the flow before $H$ meets $N_{\pm u}[X]$.
To employ the same arguments as before, we would like for there to be some compact spacelike surface $Y_{\star}$ forming a different slice of $H$ such that $A[Y_{\star}] < A[Y]$ and such that the map along integral curves of $h$ from $Y$ to $Y_\star$ is in the $-h$ direction.
We stress that we do not need an area law along the screen for nonmarginal foliations analogous to the area law of Refs.~\cite{Bousso:2015mqa, Bousso:2015qqa}.
We simply require the existence of a lesser-area surface $Y_{\star}$.
If such a surface exists, then we can tweak the forbidden loop argument to use $Y$ and $Y_{\star}$ in lieu of marginal surfaces $\sigma(\tau)$ foliating the screen.
Once the additional ingredients are in place, we can once again chain area inequalities, for any of the forbidden loops used in our previous arguments, by the following steps: 
\begin{enumerate}[label={(\alph*)}]
	\item By assumption, we have that $A[Y_{\star}] < A[Y]$.\label{step1}
	\item Then $A[Y] \leq A[X]$ from focusing along $N_{\pm u}[X]$.\label{step2}
	\item We also have $A[X] \leq A[N_{\pm k}[Y_{\star}] \cap \Sigma]$ from the definition of an HRT surface. \label{step3}
	\item Lastly, $A[N_{\pm k}[Y_{\star}] \cap \Sigma] \leq A[Y_{\star}]$ by focusing along $N_{\pm k}[Y_{\star}]$. \label{step4}
	\item Chaining everything together, we arrive at $A[Y_{\star}] < A[Y_{\star}]$, a contradiction.
\label{step5}
\end{enumerate}

A few clarifying comments are in order.
First, recall again that the vector field $k$ is defined everywhere by launching outgoing, future-directed null congruences from each of the leaves of $H$ and parallel transporting $k$ along itself.
In particular, for nonsymmetric spacetimes, the notation $N_{\pm k}[Y_\star]$ does {\it not} refer to an orthogonal congruence launched from $Y_\star$, because such a vector field will not in general even be parallel to $k$~\cite{Engelhardt:2018kcs}.
So by $N_{\pm k}[Y_\star]$, we mean the union of $N_{\pm k}[p]$ for each $p\in Y_\star$, where $N_{\pm k}[p]$ is defined by the pencil of affine generators launched from $p$ within the orthogonal null congruence $N_{\pm k}[\sigma_p]$ for $\sigma_p$, the leaf of $H$ within which $p$ resides.
For step~\ref{step4} above, it is important to note that any area difference between $Y_\star$ and complete slices of $N_{\pm k}[Y_\star]$ is accounted for entirely by the null expansion along $k$. 
In other words, there is no area difference associated with the tilt of $Y_\star$ relative to constant-affine surfaces, precisely because the congruence is null.
That is, consider any null congruence $N_{\pm u}[\nu]$, launched from any surface $\nu$, and choose two complete slices $\rho_1$ and $\rho_2$ within the congruence such that $\rho_1$ is to the past of $\rho_2$ and $\theta_u$ (defined with respect to $N_{\pm u}[\nu]$) vanishes everywhere on $\rho_2$; then $A[\rho_1]\leq A[\rho_2]$ by focusing, regardless of whether $\rho_{1,2}$ are constant-affine surfaces.

All figures containing forbidden loops depicted hitherto in Sec.~\ref{sec:Sequestering} then carry over to nonspherical spacetimes with the replacement $\sigma' \rightarrow Y$ and $\sigma \rightarrow Y_{\star}$.\footnote{In this case, however, the diagrams exaggerate the separation between $Y_{\star}$ and $Y$ on $H$, since we will show in Sec.~\ref{sssec:Ystar} that we can always find a $Y_{\star}$ infinitesimally close to $Y$.}
For all of our previous sequestration results to carry over to nonsymmetric spacetimes, all that remains is to show that such a surface $Y_{\star}$ exists. 
 
We will do this by first showing that there exists at least one point $x$ on both the intersection $Y$ and a leaf $\sigma_{\star}$ that share a common tangent space $T_x Y = T_x \sigma_{\star}$ within $H$, or equivalently, for which their normal vectors within $H$ at $x$ are parallel.
Next, we will demonstrate an area law for infinitesimal elements of area $\delta A(p)$ for any point $p \in \sigma(\tau)$ on marginal leaves foliating the screen.
Finally, we prove using this infinitesimal area law that a first-order local deformation around this point $x$ is sufficient to guarantee that a surface $Y_{\star}$ exists such that $A[Y_{\star}] < A[Y]$, allowing us to complete the chain of inequalities.
While finite deformations of $Y$ in the $-h$ direction might take us off of the screen $H$, infinitesimal deformations, by definition, remain within $H$ to linear order.
The geometric content of this proof is explicitly shown in Fig.~\ref{fig:DeformProof}.
	
\subsubsection{Existence of points on $Y$ and $\sigma(\tau)$ with parallel normals}

Let us first prove that in a large class of spacetimes there will always be at least one point on $Y$ and $\sigma_{\star}$ that share a common normal vector in $H$.
We will use the Poincar\'{e}-Hopf index theorem:

\begin{theorem}[Poincar\'{e}-Hopf \cite{H1886, Hopf1927}]\label{thm:Index}
	Let $v$ be a continuous vector field on a compact boundaryless manifold $Y$ with isolated zeroes at $x_i$, i.e., $v(x_i) = 0$. Then, the sum of the indices of the zeroes of $v$ is equal to the Euler characteristic of $Y$,
	\begin{equation} \label{eq:HairyBall}
	\sum_{i} \mathrm{Index}_{x_i} (v) = \chi(Y).
	\end{equation}
\end{theorem}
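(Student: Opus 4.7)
The plan is to prove this classical result via Morse theory, following the standard three-step strategy: (i) establish that the sum of indices depends only on $Y$ and not on the choice of vector field; (ii) compute the sum for a convenient vector field; and (iii) identify the resulting count with $\chi(Y)$.

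For step (i), I would establish homotopy invariance of the index sum. Given two vector fields $v_0, v_1$ on $Y$ with isolated zeros, I would consider a generic one-parameter family $v_t$ connecting them on $Y \times [0,1]$. Along the interpolation, zeros can appear or disappear, but a local model analysis (or equivalently, interpreting $\sum_i \mathrm{Index}_{x_i}(v)$ as the algebraic intersection number of the graph of $v$ with the zero section of the tangent bundle $TY$) shows that any bifurcation creates or annihilates zeros in pairs of opposite index. Consequently the total index sum is independent of the particular $v$ chosen.

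For step (ii), I would fix a Morse function $f \colon Y \to \mathbb{R}$, which exists on every compact smooth manifold, and take $v = -\nabla f$ with respect to an auxiliary Riemannian metric. The zeros of $v$ are precisely the critical points of $f$, and a local computation in Morse coordinates yields $\mathrm{Index}_{x}(v) = (-1)^{\lambda(x)}$, where $\lambda(x)$ is the Morse index at $x$. For step (iii), I would invoke the CW decomposition theorem from Morse theory: $Y$ is homotopy equivalent to a CW complex with one $\lambda$-cell per critical point of Morse index $\lambda$. The Euler characteristic is then the alternating sum of cell counts, which matches $\sum_x (-1)^{\lambda(x)}$, and by step (i) this common value equals the index sum of any vector field on $Y$.

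The main obstacle is the homotopy invariance in step (i): rigorously tracking how zeros bifurcate requires transversality arguments for generic paths of vector fields, together with a careful local analysis showing that bifurcating zeros carry cancelling indices (or, alternatively, setting up intersection theory on the total space of $TY$ with sufficient care). Once that foundation is in place, the Morse-theoretic local computation and the identification with $\chi(Y)$ via the CW decomposition are comparatively routine.
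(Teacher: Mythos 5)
The paper does not prove this statement at all: Poincar\'e--Hopf is quoted as a classical theorem with citations to Poincar\'e and Hopf, and is then \emph{used} (in Lemma~\ref{lem:Normals} and in Sec.~\ref{sec:extraref}) rather than established. So there is no in-paper proof to compare against; your proposal must be judged on its own. As such, it is a correct and standard Morse-theoretic route: homotopy invariance of the total index (cleanest via your parenthetical reformulation as the intersection number of the section $v$ with the zero section of $TY$, equivalently the Euler number of $TY$), the local computation $\mathrm{Index}_x(\pm\nabla f)=(\pm 1)^{n}(-1)^{\lambda(x)}$ at a nondegenerate critical point, and the identification $\sum_x(-1)^{\lambda(x)}=\chi(Y)$ via the CW structure from Morse theory. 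Two small points to tidy up. First, the theorem as stated allows a merely \emph{continuous} $v$ with isolated zeroes; your argument lives in the smooth category, so you need the standard preliminary step that the index (defined as the degree of $v/|v|$ on a small sphere around each zero) is unchanged under a $C^0$-small smooth approximation with nondegenerate zeroes. Second, with $v=-\nabla f$ the local index is $(-1)^{n-\lambda(x)}$ rather than $(-1)^{\lambda(x)}$ (the latter is the index of $+\nabla f$); this is harmless because $(-1)^n\chi(Y)=\chi(Y)$ for closed manifolds ($\chi=0$ in odd dimensions), but it is worth stating correctly. Alternative classical routes --- Hopf's original argument via the degree of the Gauss map of an embedding, or a pure intersection-theory computation of the Euler class --- would serve equally well; for the paper's purposes only the corollary matters, namely that $\chi(Y)\neq 0$ forces $v$ to vanish somewhere, which your proof delivers.
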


For $Y$ of dimension $n$, the index of $v$ at a zero located at $x_i \in Y$ is a nonzero integer defined as the degree of the map from the boundary of a closed ball enclosing $x_i$ to the $(n-1)$-sphere. 
In particular, this implies that any compact manifold with nonzero Euler characteristic must have at least one point where $v$ must vanish.
For compact manifolds with vanishing Euler characteristic, the Poincar\'{e}-Hopf theorem does not exclude the existence of zeroes, but rather it fails to guarantee their existence as indices of different zeroes can conspire to cancel out.
If a vector field is everywhere nonvanishing on a compact manifold however, the Poincar\'{e}-Hopf theorem immediately implies that the manifold has vanishing Euler characteristic.

We now establish the following:
\begin{lemma} \label{lem:Normals}
	Let $H$ be a future holographic screen and let $Y$ be a compact, boundaryless surface that is a cross section of $H$. 
	Then, for $\chi(Y) \neq 0$, there must be at least one leaf $\sigma_{\star}$ of $H$ such that at least one point exists on both $Y$ and $\sigma_{\star}$ where the normal vectors, within $H$, are parallel.
	For $\chi(Y) = 0$, the existence of such a point is still allowed, but not guaranteed.
\end{lemma}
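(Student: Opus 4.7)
The plan is to construct a continuous tangent vector field $v$ on $Y$ whose zeros coincide precisely with the desired points of parallel normals, and then apply Theorem~\ref{thm:Index} to this field. At each $p \in Y \subset H$, the flow vector $h^\mu|_p = \alpha \ell^\mu + \beta k^\mu$ lies in $T_p H$ and is, by construction, the (up-to-rescaling) normal within $H$ to the unique leaf $\sigma_\star = \sigma(\tau(p))$ passing through $p$. I would define $v(p) \in T_p Y$ as the orthogonal projection of $h|_p$ onto $T_p Y$ with respect to the spacetime metric $g$. Because $Y$ is a compact spacelike submanifold, $g|_{T_p Y}$ is positive-definite, so this projection is well-defined and pointwise unique; continuity follows from the $C^2$-smoothness of $H$ together with the smoothness of $Y$ as a cross-section of $H$.

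The key identification is then between zeros of $v$ and points of parallel normals. Since $T_p Y \subset T_p H$, the induced metric on $H$ restricted to $T_p Y$ coincides with $g|_{T_p Y}$, so any vector in $T_p H$ is $g$-orthogonal to $T_p Y$ if and only if it is orthogonal to $T_p Y$ with respect to the induced metric on $H$. Hence $v(p) = 0$ precisely when $h|_p$ is $g$-orthogonal to $T_p Y$, which in turn is equivalent to $h|_p$ being (parallel to) the normal of $Y$ within $H$ at $p$. Combined with the fact that $h|_p$ is the normal of $\sigma_\star$ within $H$ at $p$, each zero of $v$ furnishes exactly a point of parallel normals, together with the required leaf $\sigma_\star$.

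Applying Theorem~\ref{thm:Index} to $v$ then yields both halves of the lemma. If $\chi(Y) \neq 0$, the theorem forbids $v$ from being everywhere nonvanishing on $Y$, so at least one zero must exist, establishing the first claim. If $\chi(Y) = 0$, the theorem is compatible with $v$ having no zeros---paired zeros of opposite indices could cancel, or $v$ could be nonvanishing everywhere---so such a point is neither guaranteed nor excluded, matching the second assertion.

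The step I expect to be most delicate is continuity of the projection across regions where $h$ becomes null, i.e., where the induced metric on $H$ itself degenerates. This is sidestepped by projecting with the non-degenerate spacetime metric $g$ onto the always-spacelike $T_p Y$, rather than using the induced metric on $H$. A related subtlety---smoothness of $Y$ as a submanifold so that $TY$ is a genuine $C^0$ vector bundle---should follow generically from $C^2$-smoothness of $H$ and the congruence $N_{\pm u}[X]$ whenever their intersection is transverse, in line with the ``large class of spacetimes'' qualifier in the lemma statement.
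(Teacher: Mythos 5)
Your proposal is correct and follows essentially the same route as the paper: both construct a tangent vector field on $Y$ by projecting the leaf-normal $n_{\sigma_p}\propto h$ onto $T_pY$, identify its zeros with points of parallel normals, and invoke the Poincar\'e--Hopf theorem. The only (cosmetic) difference is that you project with the ambient metric $g$ restricted to the spacelike $T_pY$ rather than via the paper's induced-metric projector, which if anything handles the loci where $h$ is null more cleanly.
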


\begin{proof}
	The following geometric construction is shown in Fig.~\ref{fig:HairyBall}.
	Since $H$ admits a foliation by marginally trapped surfaces, and since $Y$ is some cross section of $H$, any point $p \in Y$ is also contained in a unique leaf $\sigma_p$.\footnote{Any leaf can however contain more than one point of $Y$, i.e., $p, p' \in Y$ can share the same leaf, in which case $\sigma_p = \sigma_{p'}$.}
	Let the normal vectors within $H$ at any point $p\in Y$ and $\sigma_p$ be denoted by $n_Y(p)$ and $n_{\sigma_p}(p)$ respectively.\footnote{By definition, $n_{\sigma_p}\propto h$. The fact that $Y$ is codimension-one with respect to $H$ ensures that there is a unique normal vector $n_Y (p)$ at any point $p \in Y$. Were $Y$ to be of any higher codimension, there would be multiple normal vectors to $Y$ at $p$.}
	The normal vectors $n_Y(p)$ and $n_{\sigma_p}(p)$ will generally not be parallel to each other for any given point $p$.
	Now, define a vector field $v(p)$ at each point $p$ in $Y$ as the projection of the normal vector $n_{\sigma_p}(p)$ onto $T_p Y$, the tangent space of $Y$ at $p$, in other words, let $v(p) = P_{T_p Y}  \left( n_{\sigma_p}(p) \right)$, where $P_{T_p Y}$ is the projector onto $T_p Y$.\footnote{We can construct this projector in the usual way. Writing the induced metric on $H$ as $h_{\mu\nu} = g_{\mu\nu} - h_\mu h_\nu/h^2$, we can define a further induced metric on $Y$ as $y_{\mu\nu} = h_{\mu\nu} - n_Y^\mu n_Y^\nu/n_Y^2$. Then the projection of a vector $w$ onto $Y$ is given by $y_{\;\;\nu}^{\mu} w^\nu$.}
	Then, by definition, $n_Y(p)$ and $n_{\sigma_p}(p)$ are parallel to each other when $v(p) = 0$ for some $p$.
	It immediately follows from Theorem~\ref{thm:Index}, the Poincar\'e-Hopf theorem, that if $\chi(Y) \neq 0$, then there must be at least one point on $Y$ where $v$ vanishes.
	Denoting these points by $x_i$, and the leaf containing them by $\sigma_{x_i}$ following Eq.~\eqref{eq:HairyBall}, $n_Y(x_i)$ and $n_{\sigma_{x_i}}(x_i)$ are therefore parallel at $x_i$.
\end{proof}

\begin{figure}[h]
	\centering
	\includegraphics[width=0.8\linewidth]{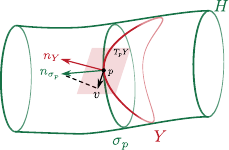}
	\caption{Geometric setup for the proof of Lemma~\ref{lem:Normals}.
		The normal vector within $H$ of $\sigma_p$ at $p$, denoted $n_{\sigma_p}$, is projected onto the tangent space $T_p Y$ of $Y$ at $p$ to define a vector field $v$ at every point $p \in Y$.
		When $\chi(Y) \neq 0$, the Poincar\'{e}-Hopf theorem guarantees the existence of at least one point where $v$ vanishes, i.e., where $n_Y$ and $n_{\sigma_p}$ are parallel to each other.}
	\label{fig:HairyBall}
\end{figure}

In what follows, it suffices to choose any one of these points if multiple exist, and label it $x$, along with the corresponding leaf $\sigma_{\star} \equiv \sigma(\tau_{\star})$.

	The special case where the intersection $Y$ is a topological two-sphere is precisely the statement of the hairy ball theorem.

\subsubsection{Local area law for infinitesimal elements of proper area}

Recall that $h^{\mu} = (d/d\tau)^{\mu}$ is the vector field tangent to the flow along a screen $H$ on any leaf $\sigma(\tau)$.
Corollary IV.4 of Ref.~\cite{Bousso:2015qqa} gives the local form of the rate of change of $A[\sigma(\tau)]$,
\begin{equation}
	\frac{d}{d \tau}A[\sigma(\tau)] = \int_{\sigma(\tau)} \sqrt{q^{\sigma(\tau)}} \alpha \theta_\ell^{\sigma(\tau)},
\end{equation}
where $q^{\sigma(\tau)}_{\mu \nu}$ is the induced metric on the leaf and $q^{\sigma(\tau)}$ is its determinant.

For any point $p \in \sigma(\tau)$ on the leaf, the infinitesimal element of proper area at $p$ is defined by
\begin{equation}
	 \delta A(p) = \sqrt{q^{\sigma(\tau)}(p)} ~dx^1 \cdots dx^n,
\end{equation}
for $p \equiv (x^1, \dots, x^n)$ in some coordinates when $\sigma(\tau)$ is of dimension $n$.
We now show that there is an analogous local area law for proper area elements $\delta A(p)$.
\begin{lemma}\label{lem:SmallAreaLaw}
	The rate of change per unit area of an infinitesimal element of proper area at a point $p \in \sigma(\tau)$ in the direction tangent to the flow along $H$ is
	\begin{equation} \label{eq:SmallAreaLaw}
		\frac{1}{\delta A(p)}\frac{d}{d \tau} \delta A(p) = \alpha (p) \theta_\ell^{\sigma(\tau)} (p).
	\end{equation}
\end{lemma}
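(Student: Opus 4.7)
The plan is to obtain the stated identity by localizing the proof of Corollary IV.4 of Ref.~\cite{Bousso:2015qqa}: one simply tracks how the area element evolves pointwise under the flow generated by $h$, rather than integrating over the leaf first. The global integrated area law already arises from integrating a pointwise statement, so the content of the lemma is to verify that no nontrivial boundary/compactness ingredient is needed and the integrand itself expresses the rate of change per unit area.

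Concretely, I would first set up adapted coordinates on $H$ in a neighborhood of a point $p\in\sigma(\tau)$. Choose coordinates $(\tau, x^1,\ldots,x^n)$ on $H$ such that the $x^i$ label integral curves of $h^\mu=(\partial_\tau)^\mu = \alpha \ell^\mu+\beta k^\mu$ and are constant along the flow, while slices of constant $\tau$ are the leaves $\sigma(\tau)$. In such coordinates the induced leaf metric $q^{\sigma(\tau)}_{\mu\nu}$ has determinant $q^{\sigma(\tau)}(\tau,x)$, and the proper area element at $p$ is $\delta A(p)=\sqrt{q^{\sigma(\tau)}(p)}\,dx^1\cdots dx^n$ with the coordinate volume $dx^1\cdots dx^n$ held fixed along the flow by construction. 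Hence
\begin{equation}
\frac{1}{\delta A(p)}\frac{d}{d\tau}\delta A(p)=\frac{1}{2}q^{\sigma(\tau)\,\mu\nu}\mathcal{L}_{h}\,q^{\sigma(\tau)}_{\mu\nu}\Big|_p,
\end{equation}
which is exactly the expansion of the flow $h$ tangent to the leaf.

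The second step is to evaluate that expansion using the null decomposition of $h$. Since $h=\alpha\ell+\beta k$ and the transverse metric $q_{\mu\nu}$ is orthogonal to both $k$ and $\ell$, linearity in $h$ gives
\begin{equation}
\tfrac{1}{2}q^{\mu\nu}\mathcal{L}_h q_{\mu\nu}=\alpha\,\theta_\ell^{\sigma(\tau)}+\beta\,\theta_k^{\sigma(\tau)}
\end{equation}
at every point of $\sigma(\tau)$. Derivatives of $\alpha$ and $\beta$ drop out because $q^{\mu\nu}k_\mu=q^{\mu\nu}\ell_\mu=0$, so the only nontrivial terms come from $\mathcal{L}_\ell q_{\mu\nu}$ and $\mathcal{L}_k q_{\mu\nu}$, whose traces are by definition $2\theta_\ell$ and $2\theta_k$. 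Now invoke marginality of the leaf: every $\sigma(\tau)$ foliating a future holographic screen satisfies $\theta_k^{\sigma(\tau)}=0$ pointwise, which kills the $\beta$ term and yields $\alpha(p)\,\theta_\ell^{\sigma(\tau)}(p)$.

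The main (only) subtlety will be justifying that $\beta$ truly drops out, i.e., that one can commute the Lie derivative with $\alpha$ and $\beta$ appropriately and that the expansion of $h$ tangent to the leaf really equals $\alpha\theta_\ell+\beta\theta_k$ without extra connection terms from the rescaling of $k$ and $\ell$. This is handled by remembering that $q^{\mu\nu}$ projects out the components of $\nabla_{(\mu}h_{\nu)}$ proportional to $k_\nu$ and $\ell_\nu$, so $\alpha$ and $\beta$ appear only as multiplicative factors in front of $\theta_\ell$ and $\theta_k$. With that in hand, the identity~\eqref{eq:SmallAreaLaw} follows immediately, and integration over the leaf recovers Corollary~IV.4 of Ref.~\cite{Bousso:2015qqa} as a consistency check.
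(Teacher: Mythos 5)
Your proof is correct and follows essentially the same route as the paper's: decompose $h^\mu=\alpha\ell^\mu+\beta k^\mu$, express the fractional rate of change of the area element as $\alpha\theta_\ell+(\cdot)\theta_k$ by linearity (with derivatives of $\alpha,\beta$ projected out since $q^{\mu\nu}k_\nu=q^{\mu\nu}\ell_\nu=0$), and kill the $k$-term using marginality $\theta_k^{\sigma(\tau)}=0$. The paper phrases the intermediate step via the chain rule in affine parameters (picking up a factor $c_k$ from the non-affine normalization of $k$, which is likewise harmless because it multiplies $\theta_k=0$), but this is only a cosmetic difference from your Lie-derivative formulation.
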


\begin{proof}
	Recall that the tangent vector field orthogonal to each leaf can be written as the linear combination
	\begin{equation}
		h^\mu = \alpha \ell^\mu + \beta k^\mu,
	\end{equation}
	for some scalar fields $\alpha$ and $\beta$.
	By Theorem IV.2 of Ref.~\cite{Bousso:2015qqa}, $\alpha < 0$ everywhere on a future holographic screen $H$.
	We are still free to normalize the null generators $k^\mu$ and $\ell^\mu$ such that $k \cdot \ell = -1$ and $\ell^\mu$ is an affine generator.
	To preserve the cross-normalization however, rescaling $\ell^\mu$ also means rescaling $k^\mu$.
	Under this rescaling, $k^\mu$ will generically not be affinely parametrized, but will remain proportional to an affine generator, i.e.,
	\begin{equation}
			\ell^\mu = \left( \frac{d}{d \lambda_\ell} \right)^\mu, \qquad k^\mu = c_k \left( \frac{d}{d \lambda_k} \right)^\mu,
	\end{equation}
where $c_k$ is a scalar. 
	Since $\ell^\mu$ affinely parametrizes null congruences $N_{\ell}[\sigma]$ fired from leaves of the screen, the fractional rate of change of cross-sectional area of $N_{\ell}[\sigma]$ is given precisely by the null expansion
	\begin{equation}
		\theta_\ell = \frac{1}{\delta A} \frac{d}{d \lambda_\ell} \delta A.
	\end{equation}
	Then it follows that
	\begin{equation}
		\frac{1}{\delta A} \frac{d}{d \tau} \delta A = \frac{1}{\delta A} h^\mu \nabla_\mu \delta A = \alpha \theta_\ell + \beta c_k \theta_k.
	\end{equation}
	Equation~\eqref{eq:SmallAreaLaw} then follows from the fact that $\theta^{\sigma(\tau)}_k = 0$ everywhere on $\sigma(\tau)$.
	The monotonicity of this infinitesimal area law follows from the property that $\alpha(p) < 0$ and $\theta^{\sigma(\tau)}_\ell < 0$ at every point $p \in \sigma(\tau)$.
\end{proof}
\begin{corollary}
	Equation~\eqref{eq:SmallAreaLaw} also holds for past holographic screens.
\end{corollary}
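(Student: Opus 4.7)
The plan is to observe that the proof of Lemma~\ref{lem:SmallAreaLaw} carries over verbatim to past holographic screens, with only sign flips that leave the algebraic identity~\eqref{eq:SmallAreaLaw} intact. The corollary is therefore a bookkeeping exercise rather than a genuinely new calculation.

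First, I would isolate the three ingredients actually used in the lemma: (i) the purely kinematic decomposition $h^\mu = \alpha \ell^\mu + \beta k^\mu$ of the flow vector into the two null normals of a leaf, which follows merely from the fact that $h$ is tangent to $H$ and orthogonal to the leaves; (ii) the normalization $k\cdot \ell = -1$ with $\ell$ chosen to be affinely parameterized, so that $\theta_\ell = (\delta A)^{-1}\, d(\delta A)/d\lambda_\ell$ is a genuine fractional rate of area change along $\ell$; and (iii) the marginality condition $\theta_k = 0$ at every point of every leaf, which is what eliminates the $\beta c_k \theta_k$ contribution to $(\delta A)^{-1}\, d(\delta A) / d\tau$.

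Next, I would check each of these ingredients against the definition of a past holographic screen. Items (i) and (ii) are independent of the sign of $\theta_\ell$: the decomposition of $h$ into null normals is geometric, and we remain free to impose any cross-normalization on $k$ and $\ell$. Item (iii), the only dynamical input, also survives unchanged: by definition the leaves of a past holographic screen are marginally antitrapped, so they still satisfy $\theta_k = 0$, the only difference being that $\theta_\ell > 0$ rather than $\theta_\ell < 0$. Hence the same chain of equalities $(\delta A)^{-1}\, d(\delta A)/d\tau = \alpha\theta_\ell + \beta c_k \theta_k = \alpha\theta_\ell$ reproduces~\eqref{eq:SmallAreaLaw} word for word.

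Finally, I would address the sign bookkeeping, which is the only place where any care is required and is the closest thing to an obstacle. Under the time reversal relating future and past screens, both $\alpha$ and $\theta_\ell$ flip sign, so on a past screen one has $\alpha > 0$ (the time-reverse of Theorem IV.2 of Ref.~\cite{Bousso:2015qqa}) together with $\theta_\ell > 0$. Their product $\alpha \theta_\ell$ therefore remains non-negative, preserving monotonicity of $\delta A$ along the flow in agreement with the past-screen area law. Beyond choosing the orientation of $\tau$ so that $h^\mu = (\partial_\tau)^\mu$ generates the past-screen flow in the appropriate (area-increasing) direction, nothing else in the proof changes, and~\eqref{eq:SmallAreaLaw} holds as stated.
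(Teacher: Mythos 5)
Your proposal is correct and follows essentially the same route as the paper: the paper's proof likewise notes that $\theta_k = 0$ still holds on the marginally antitrapped leaves, cites Corollary IV.5 of Ref.~\cite{Bousso:2015qqa} for $\alpha > 0$ on past screens, and observes that together with $\theta_\ell > 0$ this keeps the right-hand side of Eq.~\eqref{eq:SmallAreaLaw} positive definite. Your more explicit itemization of which ingredients of Lemma~\ref{lem:SmallAreaLaw} are sign-sensitive is a faithful expansion of the same argument.
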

\begin{proof}By Corollary IV.5 of Ref.~\cite{Bousso:2015qqa}, $\alpha > 0$ everywhere on past holographic screens, and $\theta^{\sigma(\tau)}_k = 0$ while $\theta^{\sigma(\tau)}_\ell > 0$ everywhere on the strictly marginally antitrapped leaves foliating a past holographic screen. The positive definiteness of the right-hand side is thus ensured.
\end{proof}

	Although the local area law in Corollary IV.4 of Ref.~\cite{Bousso:2015qqa} only applies to complete screen leaves $\sigma(\tau)$ and not surfaces of $H$ that are arbitrary deformations away from a leaf, here we are saying that Eq.~\eqref{eq:SmallAreaLaw} applies to proper area elements $\delta A$ at any point in $H$ since the foliation of $H$ by $\sigma(\tau)$ is unique by Theorem IV.8 of Ref.~\cite{Bousso:2015qqa}, and so every point lies on a leaf that is a marginal surface.

\subsubsection{Constructing $Y_{\star}$ by a local deformation of $Y$}
\label{sssec:Ystar}

Equipped with the results above, we now prove the existence of a surface $Y_{\star}$ on $H$ with $A[Y_{\star}] < A[Y]$.
\begin{proposition} \label{thm:Ystar}
	There exists a compact, boundaryless surface $Y_{\star}$ that is a complete cross section of $H$ that satisfies $A[Y_{\star}] < A[Y]$ where $Y$ is the intersection of $H$ with a null congruence of $X$, i.e., $Y = H \cap N_{\pm u}[X]$ for $u = \tilde{k}, \tilde{\ell}$. The map along integral curves of $h$ from $Y$ to $Y_\star$ is in the $-h$ direction.
\end{proposition}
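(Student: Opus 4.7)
The plan is to construct $Y_\star$ as an infinitesimal, localized deformation of $Y$ along the $-h$ direction, anchored at the tangency point $x$ provided by Lemma~\ref{lem:Normals}, and to certify via the infinitesimal area law of Lemma~\ref{lem:SmallAreaLaw} that the leading first-order variation of area is strictly negative.

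First I would fix $x \in Y$ and a leaf $\sigma_\star \equiv \sigma(\tau_\star)$ through $x$ with $T_x Y = T_x\sigma_\star$ inside $H$, as furnished by Lemma~\ref{lem:Normals}. In a tubular neighborhood of $\sigma_\star$ in $H$, I would introduce Gaussian normal coordinates $(\tau, y^a)$ so that leaves are level sets of $\tau$ and $\partial_\tau$ is the unit $H$-normal to leaves (hence $h = |h|\,\partial_\tau$). Locally writing $Y$ as the graph $\tau = f(y)$, the tangency at $x$ reads $f(y(x)) = \tau_\star$ and $df(y(x)) = 0$. Then pick a smooth nonnegative bump $\phi$ compactly supported in a small ball $U \ni y(x)$ with $\phi(y(x)) > 0$, and for $\epsilon > 0$ define $Y_\star$ by replacing $Y$ over $U$ with the graph $\tau = f(y) - \epsilon\phi(y)$, leaving $Y$ unchanged outside. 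By construction $Y_\star$ is smooth, compact, boundaryless, a complete cross section of $H$, and the map $Y \to Y_\star$ is purely in the $-h$ direction.

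The main computation is the first variation $\delta A := (d/d\epsilon)\,A[Y_\star]|_{\epsilon = 0}$, obtained by expanding $\sqrt{\det g_{Y_\star}} = \sqrt{\det q^{\sigma(f - \epsilon\phi)}}\,\sqrt{1 + g_{\tau\tau}|d(f-\epsilon\phi)|^2_q}$. At the point $y(x)$, the tangency $df(y(x))=0$ reduces the tilt factor to unity and kills its $\epsilon$-derivative, so Lemma~\ref{lem:SmallAreaLaw} applies cleanly and yields a local density variation of $-\phi(y(x))\,\alpha\theta_\ell(x)/|h|(x)\,\sqrt{\det q^{\sigma_\star}(y(x))}$. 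Because $\alpha < 0$ and $\theta_\ell < 0$ on a future holographic screen (and $\alpha, \theta_\ell > 0$ on a past one, via the Corollary to Lemma~\ref{lem:SmallAreaLaw}), the product $\alpha\theta_\ell > 0$ and this leading density is strictly negative; shrinking $U$ around $x$ then makes the integrated leading piece dominate the $O(|df|)$ and $O(d\phi)$ corrections off $x$ (the former bounded by $|df| = O(|y - y(x)|)$ from the tangency, the latter converted by integration by parts against the compactly supported $\phi$ into a Hessian-type contribution on $f$), yielding $\delta A < 0$ and hence $A[Y_\star] < A[Y]$ for all sufficiently small $\epsilon > 0$.

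The hard part will be the careful bookkeeping of these subleading corrections, in particular ensuring that the Hessian of $f$ at $x$ cannot overwhelm the leading $\alpha\theta_\ell$ term after integration by parts. A robust remedy, if necessary, is to strengthen the choice of $x$ by picking it at the appropriate extremum of $\tau|_Y$, namely the maximum if $h$ is spacelike and the minimum if $h$ is timelike, where the Hessian of $f$ is semidefinite with the sign of $-g_{\tau\tau}$ and therefore contributes nonnegatively to $-\delta A$, reinforcing the Lemma~\ref{lem:SmallAreaLaw} term and making the sign of $\delta A$ manifest. The time-reversal symmetry of the argument then covers the past-holographic-screen case identically.
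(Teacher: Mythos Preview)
Your proposal is correct and follows essentially the same approach as the paper: locate the tangency point $x$ via Lemma~\ref{lem:Normals}, deform $Y$ infinitesimally in the $-h$ direction at $x$, and invoke Lemma~\ref{lem:SmallAreaLaw} to get a strictly negative first-order area variation while arguing the tilt contribution is higher order. The paper's proof dispatches the tilt correction in a single sentence (``second-order in the angle, hence second-order in $\delta\tau$''), whereas you carry out the first-variation bookkeeping more carefully and even propose the refinement of choosing $x$ at an extremum of $\tau|_Y$ to control the Hessian sign; this extra rigor is welcome but not a different route.
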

\begin{proof}
	By Lemma~\ref{lem:Normals}, there is a point $x$ on both $Y$ and $\sigma_{\star}$ such that their normals are parallel.
	We can construct a new surface $Y_{\star}$ by deforming $Y$ at $x$ in the $-h$ direction by an infinitesimal length $\delta\tau$.
	Then, by Lemma~\ref{lem:SmallAreaLaw}, the proper area element strictly decreases, $\delta A(x-\delta\tau) < \delta A(x)$. The change in area associated with the tilt of this area element---that is, from the angle between the normals of $Y$ and $Y_\star$---is second-order in the angle, and hence second-order in $\delta\tau$, and can be dropped. 
	We therefore conclude that $Y_{\star}$ has strictly smaller area than $Y$, $A[Y_{\star}] < A[Y]$.
\end{proof}

This result justifies the assumption made in step~\ref{step1} in the chain of inequalities introduced in the beginning of Sec.~\ref{ssec:Nonspherical}.
The sequestration of holographic screens as argued by employing forbidden loops is therefore robust even in nonsymmetric spacetimes where the intersection $Y$ has nonzero Euler characteristic. 
We are now ready to give a precise statement of sequestration for general spacetimes.
\begin{theorem}[Sequestration] \label{thm:Sequestration}
For globally hyperbolic spacetimes satisfying the NEC, regular $C^2$-smooth holographic screens are forbidden from passing through the null congruences of an HRT surface $X$ with nonzero Euler characteristic. That is, the holographic screens are sequestered to live within the (closure of) one of the four causal wedges of $X$: the inner wedge, outer wedge, causal past, or causal future of $X$.
\end{theorem}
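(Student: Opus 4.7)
The plan is to proceed by contradiction: I would suppose that some single connected component of a future holographic screen $H$ crosses one of the four null congruences $N_{\pm k}[X]$ or $N_{\pm\ell}[X]$ fired from the HRT surface $X$, and derive a contradictory area inequality of the form $A[Y_\star] < A[Y_\star]$ for some compact cross section $Y_\star$ of $H$. (The past-holographic case follows by time-reversal.) For each of the four possible congruence crossings one closes a directed loop in the Penrose diagram built from three kinds of segments: (i) a piece of $H$ between two cross sections, along which the area is monotonic; (ii) a null segment of one of the congruences launched from $X$ or from a cross section of $H$, along which the Raychaudhuri equation~\eqref{eq:CIDF} and the NEC give area monotonicity in the direction of the marginal/extremal surface; and, in all but one case, (iii) a Cauchy slice $\Sigma$ through $X$, along which the maximin nature of $X$ gives $A[X] \leq A[\nu]$ for every Cauchy-splitting $\nu \subset \Sigma$. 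Chaining these inequalities around the loop yields the desired contradiction, provided at least one link is strict.

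The strictness is the crux. In the symmetric case in which the intersection $\sigma' = H \cap N_{\pm u}[X]$ is itself a leaf of $H$, the strict inequality comes directly from the area law of Refs.~\cite{Bousso:2015mqa,Bousso:2015qqa}, and the argument amounts to checking the four crossings one by one as in Fig.~\ref{fig:ForbiddenLoops}. For a general spacetime, however, the intersection $Y = H \cap N_{\pm u}[X]$ is an oblique, nonmarginal cross section of $H$, so the screen area law does not apply directly to it. The strategy is to replace $Y$ by a nearby surface $Y_\star \subset H$ with strictly smaller area, and then run the same loop argument with $Y, Y_\star$ in place of $\sigma', \sigma$. Concretely, I would invoke Lemma~\ref{lem:Normals}: under $\chi(Y)\neq 0$, the Poincar\'e--Hopf theorem applied to the vector field $v(p) = P_{T_p Y}(n_{\sigma_p}(p))$ on $Y$ forces a zero at some $x \in Y$, at which the normals within $H$ of $Y$ and of the local leaf $\sigma_\star$ coincide. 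Then Lemma~\ref{lem:SmallAreaLaw} guarantees that pushing $Y$ infinitesimally in the $-h$ direction at $x$ decreases the proper area element at first order, with the tilt contribution between $Y$ and $Y_\star$ appearing only at second order, so that Proposition~\ref{thm:Ystar} delivers $A[Y_\star] < A[Y]$.

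With $Y_\star$ in hand, the chain of inequalities (a)--(e) set out just before Sec.~\ref{ssec:Nonspherical} closes the loop for any of the four crossings: $A[Y_\star] < A[Y] \leq A[X] \leq A[N_{\pm k}[Y_\star]\cap\Sigma] \leq A[Y_\star]$. The crossing of $N_{-\ell}[X]$ is a slight outlier in that the loop can be closed without recourse to $\Sigma$, a feature I would flag as the natural entry point to the edge-case discussion of Sec.~\ref{ssec:EdgeCases}. The main obstacle is the step involving $Y_\star$: without the leaf hypothesis, one has to extract a strictly area-reducing deformation of $Y$ within $H$ from purely topological and local differential data, and this is precisely what the combined Poincar\'e--Hopf plus infinitesimal-area-law argument supplies, at the cost of the $\chi(Y)\neq 0$ assumption appearing in the theorem statement.
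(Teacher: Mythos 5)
Your proposal is correct and follows essentially the same route as the paper: forbidden loops built from screen segments, null congruences governed by Raychaudhuri plus the NEC, and the maximin Cauchy slice, with the strict link supplied in the nonsymmetric case by the Poincar\'e--Hopf construction of $Y_\star$ (Lemmas~\ref{lem:Normals} and \ref{lem:SmallAreaLaw}, Proposition~\ref{thm:Ystar}) and the chain (a)--(e). The only cosmetic difference is that the paper phrases the topological hypothesis as $\chi(X)\neq 0$ and infers $\chi(Y)\neq 0$ from boundary-homology, whereas you impose it on $Y$ directly.
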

\begin{proof}
The surface $Y$ is homologous to $X$, since both are boundary-homologous, so the Euler characteristic of $Y$ is nonzero.	The conclusion then immediately follows from steps~\ref{step1}-\ref{step5} and the intervening results introduced throughout Sec.~\ref{ssec:Nonspherical}.\footnote{The proof for the special case of spherically symmetric spacetimes is subsumed in the case of $Y=\sigma$, i.e., when the intersection is a leaf of $H$, in which case forbidden loops can be drawn as in Fig.~\ref{fig:ForbiddenLoops}.}
\end{proof}

\subsection{Edge cases: Null congruences containing a complete slice of sequestered screens}
\label{ssec:EdgeCases}

The sequestration result derived above explicitly rules out the possibility of holographic screens crossing through null congruences of $X$. 
However, it is natural to ask if a screen $H$ that is sequestered to live within a wedge of $X$ is allowed to ``touch'' a null congruence of $X$, that is, to intersect $N_{\pm u}[X]$ without passing through to another wedge.

A key component of our strategy thus far has relied on comparing areas along the screen $H$. 
If $H$ does not pass through a null congruence of $X$, then in general, the intersection between $H$ and a null congruence $N_{\pm u}[X]$ does not have to be a ``complete slice'' of $H$.
By complete slice, we mean that the intersection $Y = H \cap N_{\pm u}[X]$ is such that $H$ intercepts each of the null generators of $N_{\pm u}[X]$ (see Fig.~\ref{fig:DeformProof}).
A complete slice of $H$ would necessarily be a compact, boundaryless surface by construction, but a partial slice will generally have a boundary.
If $Y$ were just a partial slice, then the existence of a surface $Y_{\star}$ with strictly smaller area than $Y$ is no longer ensured by Proposition~\ref{thm:Ystar} since the Poincar\'{e}-Hopf theorem (Theorem~\ref{thm:Index}) used to prove Lemma~\ref{lem:Normals} does not hold for manifolds with boundary (unless we enforce additional conditions on the vector field that cannot be guaranteed for our $v$).
It is therefore not guaranteed that we can use rule~\ref{rule2} or step~\ref{step1} of the forbidden loop toolkit to compare areas along $H$ when $Y$ is a partial intersection.
Even if we suppose for a moment that the surface $Y_{\star}$ exists, since the surface $N_{\pm k}[Y_{\star}] \cap \Sigma$ in step~\ref{step3} is not Cauchy-splitting, we can no longer use rule~\ref{rule3} or step~\ref{step3} to compare its area with the HRT surface $X$.
Thus, $Y = H \cap N_{\pm u}[X]$ being a partial intersection presents an obstruction to finding forbidden loops as it renders the area inequalities of rules~\ref{rule2} and \ref{rule3} (in the symmetric case) or steps~\ref{step1} and \ref{step3} (in the nonsymmetric case) invalid.
In what follows, we will therefore restrict ourselves to intersections $Y$ that are complete slices of $H$.
We find that although the answer depends subtly on the null congruence in question, we can nevertheless forbid screens from touching any null congruence when their intersection $Y$ is a complete slice of $H$.

\subsubsection{$N_{-\tilde{\ell}}[X]$ cannot contain a complete slice of $H$}\label{sec:extraref}
Suppose that $H$ is in either the exterior or causal past of $X$, and that the intersection of $H$ with the null congruence $N_{-\tilde{\ell}}[X]$ is a complete slice $Y$, i.e., $H \cap N_{-\tilde{\ell}}[X] = Y$.
We would like to compare ingoing, future-directed null expansions at $X$ and $Y$, by adopting a similar strategy to the proof of Lemma~\ref{lem:Normals} and invoking the Poincar\'{e}-Hopf theorem.
For every point $p \in Y$, identify a leaf $\sigma_p$ of the screen $H$ that contains $p$.
From that leaf, take the ingoing future-directed null generator of congruences orthogonal to $\sigma_p$ and call it $\ell(p)$.
This defines a null expansion $\theta_{\ell}(p)$ everywhere along $N_{+\ell}[Y]$, by which we mean the pointwise-defined null congruence as elaborated on in the discussion of $N_{\pm k}[Y_{\star}]$ for step~\ref{step4}.
Since by Theorem~IV.8 of Ref.~\cite{Bousso:2015qqa} each leaf of $H$ is strictly marginally trapped, each $\theta_{\ell}(p)$ at $p \in Y$ is strictly negative.
Furthermore, by focusing~\eqref{eq:CIDF}, each $\theta_{\ell}(p)$ will remain negative to the future along $N_{+\ell}[Y]$, and thus on every point of $X$.
Now, take $\tilde{\ell}$ to be the ingoing, future-directed null generator of congruences orthogonal to $X$.
Note that because $Y$ is not a constant-affine slice of $N_{-\tilde{\ell}}[X]$, $\ell(p)$ will in general not point in the same direction as $\tilde{\ell}$. 
Since $\tilde{\ell}$ is by definition orthogonal to $X$, we can once again define a vector field $v$ as the projection of $\ell(p)$ onto $X$ much like the setup shown in Fig.~\ref{fig:HairyBall}.
More precisely, $v$ will be the projection of $\ell(p)$ onto the tangent space of points on $X$ that the pointwise-defined null congruence $N_{+\ell}[Y]$ intersects.
By the Poincar\'{e}-Hopf theorem (Theorem~\ref{thm:Index}), $v$ has to vanish at some point, say $y$, on $X$, if $\chi(X) \neq 0$.
At that point, we have that $\theta_{\ell}(y)$ and $\theta_{\tilde{\ell}}$ are related by an affine scaling since $\ell(y)$ and $\tilde{\ell}$ are parallel to each other.
However, since $\theta_{\ell}(p) < 0$ for all $p \in Y$, and can only grow more negative on $N_{+\ell}[p]$, but $\theta_{\tilde{\ell}} = 0$ everywhere on $X$ by definition of an extremal surface, we have a contradiction.
As a consequence, $N_{-\tilde{\ell}}[X]$ cannot contain a complete slice $Y$ of a future holographic screen. 

As alluded to in Fig.~\ref{fig:ForbiddenLoops}, the forbidden loop for a screen passing through $N_{-\tilde{\ell}}[X]$ did not rely on the defining Cauchy slice $\Sigma$ for a maximin HRT surface $X$.
It is now apparent that since our argument only depends on properties of ingoing future-directed null expansions along $N_{-\tilde{\ell}}[X]$ and the fact that future holographic screens are foliated by strictly marginally trapped surfaces, it is irrelevant whether the screen approaches $N_{-\tilde{\ell}}[X]$ from the exterior or past of $X$.
This statement holds for both the earlier analysis of a screen crossing $N_{-\tilde{\ell}}[X]$, as well as the current analysis of $N_{-\tilde{\ell}}[X]$ containing a complete slice of $H$.
As such, our sequestering results for $N_{-\tilde{\ell}}[X]$ are valid without ever invoking forbidden loops.

\subsubsection{$N_{+\tilde{\ell}}[X]$ cannot contain a complete slice of $H$}
Next, suppose that $H \cap N_{+\tilde{\ell}}[X] = Y$ is a complete slice of a screen $H$ in either the interior or causal future of $X$.
In this scenario, it is easy to identify the forbidden loops that lead to contradictory area inequalities as shown in Fig.~\ref{fig:TouchTopLeft}.

\begin{figure}[t]
	\centering
	\includegraphics[width=\linewidth]{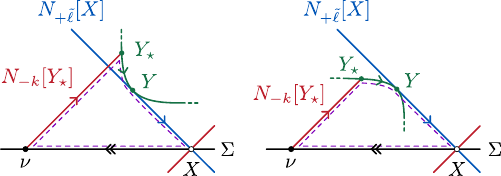}
	\caption{Forbidden loops (purple dotted lines) appearing when the null congruence $N_{+\tilde{\ell}}[X]$ contains a complete slice of a future holographic screen in the future of $X$ (left panel), or the interior of $X$ (right panel). For the special case of symmetric spacetimes, simply make the substitution $Y_{\star} \rightarrow \sigma$ and $Y \rightarrow \sigma'$.}
	\label{fig:TouchTopLeft}
\end{figure}

The key observation here is that regardless of whether $H$ is in the interior or future of $X$, in spherically symmetric spacetimes there always exists another leaf $\sigma \in H$ that flows to $\sigma'$ with strictly increasing area $A[\sigma] < A[\sigma']$, whereas in nonsymmetric spacetimes, $Y$ being a complete slice of $H$ implies the existence of $Y_{\star}$ with $A[Y_{\star}] < A[Y]$ by Proposition~\ref{thm:Ystar}.
The past null congruences $N_{-k}[\sigma]$ and $N_{-k}[Y_\star]$ fired from $\sigma$ and $Y_{\star}$ respectively both intersect the Cauchy slice $\Sigma$ at $\nu$ and close the loop.
This allows us to close the chain of inequalities in the same fashion as Figs.~\ref{fig:ThreeRules} and \ref{fig:ForbiddenLoops}.
We therefore also conclude that $N_{+\tilde{\ell}}[X]$ cannot contain a complete slice of any future holographic screen.

\subsubsection{$N_{\pm \tilde{k}}[X]$ cannot contain a complete slice of $H$}

The final configurations to consider are, first, when $H \cap N_{+\tilde{k}}[X] = Y$ with $H$ in the exterior or future of $X$, and second, when $H \cap N_{-\tilde{k}}[X] = Y$ with $H$ in the interior or past of $X$.
For a screen sequestered to the future of $X$ to intersect $N_{+\tilde{k}}[X]$ at $Y$, the screen must continue from $Y$ to $X$ along $N_{+\tilde{k}}[X]$ to preserve monotonicity of cross-sectional area.
Likewise, a screen sequestered to the interior of $X$ intersecting $N_{-\tilde{k}}[X]$ at $Y$ must extend from $Y$ to $X$ along $N_{-\tilde{k}}[X]$.
In both of these cases, it is straightforward to find forbidden loops as demonstrated in Fig.~\ref{fig:LoopsNk}.
\begin{figure}[h]
	\centering
	\includegraphics[width=0.7\linewidth]{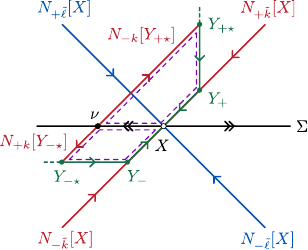}
	\caption{Forbidden loops (purple dotted lines) that arise for sequestered holographic screens approaching $Y_{\pm}$ on $N_{+\tilde{k}}[X]$ and $N_{-\tilde{k}}[X]$ from the future and interior, respectively. Similar loops exist when the screens approach $Y_{\pm}$ in a spacelike and timelike fashion, respectively.}
	\label{fig:LoopsNk}
\end{figure}
We can think of these as limits of screens depicted in Fig.~\ref{fig:HRT_EndLoops} where a finite segment of screens ending on $X$ are boosted onto $N_{\pm \tilde{k}}[X]$.

On the other hand, for a screen sequestered to the exterior of $X$ to simultaneously intersect $N_{+\tilde{k}}[X]$ at $Y$ and respect monotonicity of area along the screen, the screen must have a semi-infinite null segment with area flow from future null infinity to $Y$ along $N_{+\tilde{k}}[X]$, then escape from $Y$ to the outer wedge of $X$.
In the same vein, a screen sequestered to the past of $X$ will necessarily have a null segment with flow from past null infinity to $Y$ along $N_{-\tilde{k}}[X]$, and leave $Y$ to the past of $X$.
No forbidden loops exist for such outgoing screens from $N_{\pm \tilde{k}}[X]$, so these constructions (shown in Fig.~\ref{fig:TouchNk}) are in principle allowed.
While these are still holographic screens in a general sense, the presence of a semi-infinite null segment means that they run afoul of a genericity condition in Definition II.8 of Ref.~\cite{Bousso:2015qqa} that defines \emph{regular} holographic screens.
As reviewed in Sec.~\ref{sec:BackgroundDefs}, this condition requires a regular holographic screen to have no set of leaves of nonzero measure on which the tangent to the screen's flow is null and outward-directed.
If we restrict ourselves to only regular holographic screens, then we are forced to forgo the remaining two types of irregular screens containing semi-infinite null segments along $N_{\pm \tilde{k}}[X]$.

An interesting special case of these ``irregular'' screens arises under symmetry (e.g., spherical), when the intersection itself is a leaf of the screen.
When $Y=\sigma$, since by definition $\theta_k = 0$ on $\sigma \in H$, focusing by the Raychaudhuri equation says that if $N_{+k}[X]$ is to contain $\sigma$ as a cross section, the segment of $N_{+k}[X]$ between $X$ and $\sigma$ must be a stationary light sheet, i.e., the segment must have vanishing null expansion $\theta_k = 0$, giving $A[X] = A[\sigma]$.
The configuration with a stationary light sheet on a segment of $N_{+k}[X]$ and a spacelike screen leaving it to the exterior is precisely the Engelhardt-Wall construction of Refs.~\cite{Engelhardt:2017aux,Engelhardt:2018kcs}.
However, our results excluding irregular screens is not in contention with the construction of Refs.~\cite{Engelhardt:2017aux,Engelhardt:2018kcs}, as the goal there was not to construct a holographic screen.
Rather, by deleting everything but the outer wedge of a minimar leaf $\mu$ on a spacelike segment and gluing a past-directed stationary light sheet to the interior of $\mu$ such that $A[X] = A[\mu]$, Engelhardt and Wall sought to build a partial Cauchy slice that guarantees the existence of a maximin HRT surface $X$ for the outer entropy.

\begin{figure}[t]
	\centering
	\includegraphics[width=0.7\linewidth]{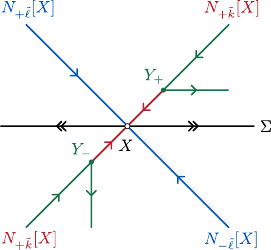}
	\caption{Irregular screens that intersect $N_{\pm \tilde{k}}[X]$ at $Y_{\pm}$.}
	\label{fig:TouchNk}
\end{figure}

In summary, we have found that not only are regular holographic screens $H$ sequestered to live within one of four causal wedges defined by the null congruences of $X$, such sequestered screens are further forbidden from intersecting $N_{\pm u}[X]$ for $u=\tilde{k}, \tilde{\ell}$ in such a way that the intersection is a complete slice of $H$.

\section{Application: constraints on coarse graining}
\label{sec:Application}

As an illustration of the results of the previous section, we now turn to the question of whether and how one could give a coarse grained entropic interpretation to the areas of leaves of holographic screens that are \emph{not} apparent horizons.
Specifically, we mean holographic screens whose leaves are not stable, outermost, marginal surfaces for which the corresponding outer wedges do not nest.
In spherically symmetric spacetimes, these are screens that have at least one nonempty interval $[\tau_1, \tau_2]$ on which the tangent $(\partial_\tau)^\mu = h^\mu$ is everywhere timelike.
Such screens are ubiquitous in cosmology.
Already in AdS/CFT, a holographic screen need not be foliated by minimar surfaces, and it turns out that the existing proof of \Eq{eq:Sout_area} fails to hold in such cases.
It would be illuminating to uncover an analogous relation for such screens, and here we investigate how screen sequestration can inform analogous constructive attempts at a proof.

First, to understand the technical reasons why \Eq{eq:Sout_area} ceases to hold, let us give a rough sketch of its proof before considering a concrete example.
The proof of Eq.~\eqref{eq:Sout_area} essentially proceeds in two steps~\cite{Engelhardt:2017aux,Engelhardt:2018kcs}.
First, assuming that a spacetime contains both a minimar surface $\mu$ and an HRT surface $X$ that is homologous to the same boundary as $\mu$, one shows that $A[X] \leq A[\mu]$.
In other words, the area of a minimar surface provides an upper bound for the area of any HRT surface for $B$ and hence also an upper bound for $S(\rho_B)$ via the HRT formula.
It therefore follows that $S^{\text{(outer)}}[\mu] \leq A[\mu]/4G\hbar$.
Second, one saturates the bound by constructing a spacetime that contains both $O_W[\mu]$ and an HRT surface $X_\star$ that is homologous to $\mu$ and has $A[X_\star] = A[\mu]$, thereby proving \Eq{eq:Sout_area}.

The first step is relatively straightforward, with the bound following as a consequence of focusing (the Raychaudhuri equation \eqref{eq:CIDF}).
The second step is more involved and requires that several subtle points about existence and uniqueness be addressed, which we will not review here; nevertheless, we describe the basics of the construction below (see also Fig.~1 of Ref.~\cite{Engelhardt:2017aux} for illustration).
Let $k$ denote the outward-pointing, future-directed, marginal null direction on $\mu$ in which $\theta_k = 0$ and let $\ell$ denote the direction in which $\theta_\ell < 0$.
(Here we have in mind that $\mu$ is a leaf of a future holographic screen, for which the area of leaves increases toward the past and toward the AdS boundary; the case of a past holographic screen is just the time reverse.)
Beginning with $O_W[\mu]$ held fixed, one first glues a stationary null light sheet (i.e., with $T_{kk}$ and $\varsigma_k$ both vanishing) to $\mu$ in the $-k$ direction.
The expansion $\theta_k$ remains zero on the light sheet by virtue of it being stationary, and so the light sheet is foliated by constant-area slices.
Furthermore, the condition for strict spacetime stability, $\nabla_k \theta_\ell < 0$, continues to hold on the light sheet.
Equivalently, we may write $\nabla_{-k} \theta_\ell > 0$.
Therefore, moving along the light sheet away from $\mu$ in the $-k$ direction, one eventually encounters\footnote{In general, for the nonspherical case, the constructed surface $X_\star$ on which $\theta_k=\theta_\ell=0$ is not a surface of constant affine parameter along $N_{-k}$, and hence is not extremal, since $\ell$ is not surface-orthogonal. Nevertheless, given $X_\star$ one can prove that a bona fide HRT surface exists as a section of $N_{-k}$, as was done in Ref.~\cite{Engelhardt:2018kcs}.}
 an extremal slice $X_\star$ on which $\theta_k = \theta_\ell = 0$.
Labeling the segment of the stationary light sheet from $\mu$ to $X_\star$ by $N_{-k}[\mu; X_\star]$, one then glues the CPT conjugate of $O_W[\mu] \cup N_{-k}[\mu; X_\star]$ to $X_\star$, thereby engineering complete Cauchy data for a spacetime with two asymptotic boundaries.
The extremal surface $X_\star$ constructed in this way ends up being an HRT surface for the spacetime (i.e., one can show that it is an extremal surface of \emph{minimal} area) with the same area as $\mu$, which completes the proof. 

We reviewed a sketch of Engelhardt and Wall's construction in Refs.~\cite{Engelhardt:2017aux,Engelhardt:2018kcs} in order to point out a small yet crucial detail; namely, that the construction relies on the stability property of minimar surfaces, $\nabla_k \theta_\ell < 0$.
However, it is easy to find examples of marginal surfaces that do not obey strict spacetime stability and for which one might expect to still have an entropic interpretation.
In the interior of an AdS black hole formed from the gravitational collapse of matter, for instance, the leaves of a future holographic screen can fail to obey strict spacetime stability inside of the collapsing matter, and are as such not minimar.
In other words, the leaves remain closed, marginally trapped, and boundary-homologous, but they fail to be minimar because the sign of $\nabla_k \theta_\ell$ is wrong.

In fact, by a very mild extension of a result of Booth et al.~\cite{Booth:2005ng}, at least in the spherically symmetric case it follows that the sign of $\nabla_k \theta_\ell$ on a leaf $\sigma$ directly correlates with the character of the screen at that leaf, regardless of whether it is a future or past holographic screen.\footnote{Booth et al.~\cite{Booth:2005ng} were concerned with marginally trapped tubes, but the arguments apply just as well to the marginally antitrapped case and to holographic screens.}
Recalling the notation from \Sec{ssec:ScreenDefs}, let $h^\mu$ denote the tangent vector to a holographic screen $H$ in the direction of increasing area.
Let us choose a normalization for $k$ such that $k \cdot \ell = -1$ and $h^\mu = \alpha \ell^\mu + \beta k^\mu$ where $\alpha < 0$ or $>0$ for future and past holographic screens, respectively.
Since $k$ is the marginal direction, it follows that the Lie derivative $\mathcal{L}_h \theta_k = 0$, from which we obtain
\begin{equation}
-\frac{\alpha}{\beta} = \frac{\mathcal{L}_k \theta_k}{\mathcal{L}_\ell \theta_k} = \frac{\nabla_k \theta_k}{\nabla_\ell \theta_k}.
\end{equation}
Furthermore, since $h_\mu h^\mu = -2\alpha\beta$, the character of $h^\mu$ fixes the sign of $\beta$.
In the spherically symmetric case, for which $h^\mu$ has the same character everywhere on a single leaf, $\alpha/\beta$ is therefore positive when $H$ is timelike, negative when $H$ is spacelike, and zero when $H$ is null.
Focusing implies that $\nabla_k \theta_k \leq 0$, and in the spherically symmetric case, cross-focusing (see \Eq{eq:CIDF}) implies that $\nabla_\ell \theta_k = \nabla_k \theta_\ell$.
We therefore conclude that, for a spherically symmetric holographic screen $H$, the sign of $\nabla_k \theta_\ell$ at a location on $H$ is in direct correspondence with the character of $H$ at that location:
\begin{equation}
\nabla_{k} \theta_\ell = \left\{
\begin{array}{ll}
> 0 & H~\text{timelike} \\
0~\text{or undefined} & H~\text{null} \\
< 0 & H~\text{spacelike}.
\end{array} \right.
\end{equation}

It would be somewhat surprising if the interpretation of holographic screen leaf area as a coarse grained entropy were to fail in AdS/CFT where a future holographic screen transitions to being a nonspacelike hypersurface.
With this perspective in mind, let us continue to examine the timelike portion of a holographic screen in the interior of a holographic black hole with the additional assumption that the spacetime is spherically symmetric, for simplicity.
Consider a leaf $\sigma$, and let us attempt to mirror the construction of Refs.~\cite{Engelhardt:2017aux,Engelhardt:2018kcs} as closely as possible.
On such a leaf, we have that $\theta_k = 0$, $\theta_\ell < 0$, and $\nabla_k \theta_\ell > 0$.
Therefore, based only on the signs of expansions and cross-focusing, the analogous construction is to join a stationary null light sheet in the $+k$ direction, so that one of its future slices is eventually extremal.
Let us label this slice $X$.

Now, it is unclear what region relative to $\sigma$ should be held fixed under a coarse graining prescription.
Let us explore one option based on plausible assumptions; however, the choices described below are not necessarily unique.
The area law obeyed by holographic screens suggests that the regions corresponding to different $\sigma$ should nest along the screen.
Together with the fact that we join a null light sheet to $\sigma$ along the $+k$ direction, this suggests that the past wedge of $\sigma$ is what should be thought of as being held fixed.
However, according to this reasoning based on the area law, it seems reasonable that the region being held fixed should be augmented to include the part of the holographic screen that is foliated by leaves with a larger area than $\sigma$.
For future reference, let us call this the part of the screen that is forward to $\sigma$.

At this point we can invoke the sequestration arguments of the previous section.
Consider the null congruences that emanate from $X$.
So that we avoid a situation in which the congruence in the $-\ell$ direction intersects the holographic screen---as discussed in Sec.~\ref{sec:extraref}---we conclude that there must be some obstruction to the congruence before it hits the screen, for example, in the form of a singularity.
See \Fig{fig:timelike_construction} for illustration.
(Of course, even such a case would correspond to an irregular holographic screen, cf. Fig.~\ref{fig:TouchNk}.)

\begin{figure}
\includegraphics[width=0.9\columnwidth]{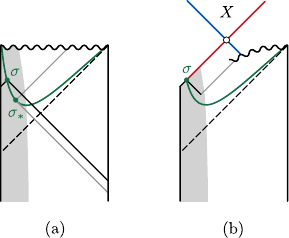}
\caption{(a) Future holographic screen inside a spherically symmetric AdS black hole formed from the gravitational collapse of matter (shaded). In the matter-containing region, the screen can have timelike character, and past wedges of leaves in this timelike part nest.
However, if we insist that the region that is fixed under coarse graining nest with all such forward regions, then at minimum we must also take the union of the past wedge of $\sigma$ with the outer wedge of $\sigma_*$, which is where the screen goes from being timelike to spacelike.
(b) Upon attempting to mimic the construction of Engelhardt and Wall~\cite{Engelhardt:2017aux,Engelhardt:2018kcs}, we see that any constructed extremal surface $X$ should be screened, e.g., by a singularity, so as to avoid running afoul of sequestration constraints.}
\label{fig:timelike_construction}
\end{figure}

Alternatively, the coarse graining procedure could simply not include the part of the holographic screen that is forward to $\sigma$, but then the challenge is to make sense of the area law.
The observation that a holographic screen can alternate between having timelike and spacelike components poses a further conceptual challenge to such an approach.
The basic construction of Refs.~\cite{Engelhardt:2017aux,Engelhardt:2018kcs} goes through whenever the screen is spacelike, but it becomes unclear how to bridge the coarse graining procedure for successive spacelike components through timelike components.
In any case, these observations illustrate the ways in which screen sequestration can inform possible coarse graining prescriptions.
We also remark that understanding how to coarse grain the leaves of nonspacelike holographic screens in AdS black holes may suggest ways to coarse grain holographic screens in cosmology.
This latter problem is doubly puzzling, as cosmological models are bereft of an entropic formula like the HRT prescription, short of embedding a cosmology in a holographic spacetime~\cite{Antonini:2022blk,Sahu:2023fbx}.

\section{Discussion}
\label{sec:Discussion}

We end this article by connecting the results presented above to existing literature, as well as to possible generalizations and future research directions.

\subsection{Relation to other works}

Holographic screen sequestration by HRT surfaces most closely connects with recent work by Engelhardt and Folkestad, in which they proved that trapped surfaces in the classical limit of AdS/CFT must lie behind event horizons \cite{Engelhardt:2020mme}.
As an intermediate step, they showed that minimar surfaces must lie behind event horizons, and thus so do spacelike holographic screens.
In AdS/CFT, a consequence of causal wedge inclusion is that any HRT surface for a complete connected component of the asymptotic boundary must be causally disconnected from the boundary component in question.
If the HRT surface is nontrivial, it must therefore lie behind an event horizon.
Therefore, in the case where a minimar surface is part of a holographic screen that is hidden behind an event horizon that also hides an HRT surface, we can think of sequestration as further constraining the region behind the horizon in which the minimar surface can lie.
In another sense, sequestration is complementary to Engelhardt and Folkestad's findings, since it also constrains where nonminimar surfaces---namely, marginal surfaces that foliate timelike, mixed, or indefinite-signature holographic screens---can lie when HRT surfaces are present.
The arguments for screen sequestration do not make use of the holographic dictionary, and so in principle the results remain applicable for spacetimes that are not asymptotically AdS, provided that they admit nontrivial (boundary-homologous) HRT surfaces.

\subsection{Other extremal surfaces}

In proving results about sequestration, we worked exclusively with HRT surfaces, which in particular are extremal surfaces that are maximin.
There are of course other types of extremal surfaces: minimax extremal surfaces, which have maximal area on a Cauchy surface and have the least area among the maximal area slices of a complete family of Cauchy surfaces, or even ``maximax'' surfaces, which are largest area surfaces among such a family of maximal area surfaces.
Examples of minimax and maximax extremal surfaces are the equatorial sphere of the throat of a closed bouncing Friedmann--Lema\^{i}tre--Robertson--Walker (FLRW) cosmology and the equatorial sphere of the maximal volume slice of a big bang--big crunch FLRW cosmology, respectively.
In such cases, however, the extremal surface has little sequestering power.
In a diagram that is analogous to \Fig{fig:ForbiddenLoops}, the extremal surface $X$ would look like an attractor rather than a saddle.
The absence of any lines that represent surfaces of increasing area away from $X$ prevent us from forming the forbidden loops that were necessary to argue for sequestration, save for the loop shown in the bottom right of \Fig{fig:ForbiddenLoops}.
We therefore conclude that future (past) holographic screens are only forbidden from crossing from $O_W[X]$ into $I^-[X]$ ($I^+[X]$) in the nonmaximin case. 

\subsection{Semiclassical generalizations}

In this article, we presented purely geometrical arguments and demonstrated results that apply to classical general relativity.
Nevertheless, we expect the results on sequestration to continue to go through when semiclassical quantum corrections are taken into account.

In particular, one can define a semiclassical version of a holographic screen called a \emph{Q-screen} \cite{Bousso:2015eda}, essentially by replacing all instances of references to the area of surfaces with the generalized entropy associated to surfaces~\cite{Bekenstein:1972tm,Bekenstein:1973ur,Hawking:1976de}.
Given a closed, spacelike, codimension-two surface $\sigma$ that splits a Cauchy hypersurface $\Sigma$ into the interior $\Sigma^-[\sigma]$ and exterior $\Sigma^+[\sigma]$, the generalized entropy associated with $\sigma$ is
\begin{equation}
S_\mrm{gen}[\sigma] = \frac{A[\sigma]}{4 G\hbar} + S(\rho_\mrm{out}),
\end{equation}
where $\rho_\mrm{out} = \tr_\mrm{\Sigma^-[\sigma]} \rho$ and $\rho$ denotes the state of quantum fields on $\Sigma$.
A Q-screen is then a codimension-one hypersurface that admits a foliation by surfaces whose generalized entropy is stationary with respect to deformations in one null direction and has variations of definite sign in the other null direction.
Equivalently, one can define the \emph{quantum expansion} in the null direction $k$ at a point $y \in \sigma$ as
\begin{equation}
\Theta_k[\sigma; y] = \lim_{\delta A \rightarrow 0} \left. \frac{4 G\hbar}{\delta A} \frac{d S_\mrm{gen}}{d\lambda} \right|_y,
\end{equation}
where $\delta A$ is a small element of area on $\sigma$ at $y$ and $\lambda$ is an affine parameter for the null geodesic generated by $k$.
The leaves of a future (respectively, past) Q-screen then satisfy $\Theta_k[\sigma] = 0$ and $\Theta_\ell[\sigma] < 0$ (respectively, $>0$) where, just as with (classical) expansion, in omitting from the argument the point $y \in \sigma$ we mean that the (in)equality holds everywhere on $\sigma$.

In complete analogy with holographic screens, one can show that Q-screens obey a monotonic generalized entropy law \cite{Bousso:2015eda} if one assumes the quantum focusing conjecture (QFC) \cite{Bousso:2015mna}, which states that quantum expansion is nondecreasing along any null congruence,
\begin{equation}
\frac{d \Theta}{d \lambda} \leq 0.
\end{equation}

In order to generalize sequestration to Q-screens, the last generalization that we must make is to replace maximin HRT surfaces with quantum maximin surfaces \cite{Akers:2019lzs}.
Analogously to classical maximin surfaces, a quantum maximin surface is obtained by minimizing generalized entropy on a complete family of Cauchy hypersurfaces and then identifying the minimal surface of maximum generalized entropy (or any one of such surfaces if there are several).
A quantum maximin surface has vanishing quantum expansions, and it can be shown that a quantum maximin surface is also a quantum extremal surface \cite{Engelhardt:2014gca} provided that it possesses an additional stability property \cite{Akers:2019lzs}.
Assuming the QFC, it then follows that quantum maximin surfaces sequester Q-screens in the same way that classical maximin surfaces sequester holographic screens.

In connection with semiclassical generalizations, it is natural to ask whether the results presented in this work have any connection to quantum extremal islands (see Ref.~\cite{Almheiri:2020cfm} for a review). 
These regions appear in the islands formula, which is a proposal for how to compute the (fine grained) von~Neumann entropy of a nongravitating system when it is coupled to an auxiliary system that gravitates.
Among its by-now myriad applications, the islands formula has been shown to produce a Page curve for the Hawking radiation in a large variety of models of evaporating black holes coupled to a nongravitating reservoir (e.g., Refs.~\cite{Penington:2019npb,Almheiri:2019hni,Akers:2019nfi,Penington:2019kki,Almheiri:2019psy,Gautason:2020tmk,Hollowood:2020cou} to name but a few).
If $R$ is the nongravitating system, then the islands formula gives the following prescription for computing the von~Neumann entropy of the reduced state of $R$:
\begin{equation}
S(\rho_R) = \min \ext_I \left\{ S(\tilde \rho_{R \cup I}) + \frac{A[\partial I]}{4 G\hbar}  \right\}.
\end{equation}
The extremization and minimization is over the location and shape of an \emph{island} $I$---in other words, a (partial) Cauchy slice---within the gravitating system, and the tilde over $\tilde \rho_{R \cup I}$ is to denote that we are instructed to evaluate the von~Neumann entropy of the \emph{semiclassical} state of quantum fields on the curved spacetime region $R \cup I$.

While $S(\rho_R)$ is a generalized entropy and the boundary of the island is a quantum extremal surface, we cannot immediately conclude that $\partial I$ sequesters Q-screens in the same way that a quantum maximin surface does, since the region $R \cup I$ need not have any relation to the region $\Sigma^+[\sigma]$ used in defining the Q-screen.
That said, it may be possible to define a Q-screen in a different way, e.g., by replacing $\rho_\mrm{out} = \tr_{\Sigma^-[\sigma]}\rho$ with a state defined on $\Sigma^-[\sigma] \cup R$ or some other region that includes $R$, for which conclusions may be drawn about sequestration by $\partial I$.

In relation to \Sec{sec:Application}, quantum extremal islands may have some connection to coarse grainings based on leaves of timelike holographic screens in the interior of an AdS black hole.
For instance, an island in the black hole interior provides a new geometric reference that could affect the way one thinks about nesting of the causal wedges of screen leaves.
While these comments are perhaps the seeds of interesting questions, we leave further investigation to future work.

~

\noindent {\it Acknowledgments:}  We thank Bartek Czech, Edgar Shaghoulian, and Mark Van Raamsdonk for comments and discussions during the preparation of this manuscript.
A. C.-D. was supported for a portion of this work by the Natural Sciences and Engineering Research Council of Canada (NSERC), [funding reference number PDF-545750-2020], and by Grant \#62312 from the John Templeton Foundation for ``The Quantum Information Structure of Spacetime'' (QISS) project.
P.L. is supported by the Four Year Doctoral Fellowship at the University of British Columbia.
G.N.R. is supported by the James Arthur Postdoctoral Fellowship at New York University. This work has benefited from discussions at the QIMG 2023 long term workshop (YITP-T-23-01) held at the Yukawa Institute for Theoretical Physics, Kyoto University.

\newpage

\bibliographystyle{utphys-modified}
\bibliography{holo-screen-sequestration.bib}

\end{document}